\documentclass[a4paper]{article}
\usepackage[T1]{fontenc}
\usepackage{graphicx}
\usepackage{color}
\usepackage[utf8]{inputenc}
\usepackage{amsmath,amsfonts,amsthm,mathtools,latexsym}
\usepackage{fullpage}
\usepackage{hyperref}
\usepackage{cite}

\usepackage{mdframed}

\newtheorem{theorem}{Theorem}
\newtheorem{proposition}{Proposition}
\newtheorem{lemma}{Lemma}

\title{Algorithmic Theory of Qubit Routing\thanks{This work was partially supported by 
JSPS KAKENHI Grant Numbers 
JP19K11814, 
JP20H05793, 
JP20H05795, 
JP20K11670, 
JP20K11692, 
JP22H05001, 
JP23K10982. 
}
\thanks{To appear in Proceedings of 18th Algorithms and Data Structures Symposium (WADS 2023).}%
}

\author{Takehiro Ito\thanks{Graduate School of Information Sciences, Tohoku University. E-mail: \texttt{takehiro@tohoku.ac.jp}}
    \and
    Naonori Kakimura\thanks{Faculty of Science and Technology, Keio University. E-mail: \texttt{kakimura@math.keio.ac.jp}} 
    \and
    Naoyuki Kamiyama\thanks{Institute of Mathematics for Industry, Kyushu University. E-mail: \texttt{kamiyama@imi.kyushu-u.ac.jp}}
    \and
    Yusuke Kobayashi\thanks{Research Institute for Mathematical Sciences, Kyoto University. E-mail: \texttt{yusuke@kurims.kyoto-u.ac.jp}}
    \and
    Yoshio Okamoto\thanks{Graduate School of Informatics and Engineering, The University of Electro-Communications. E-mail: \texttt{okamotoy@uec.ac.jp}}
    }
\date{}

\begin{document}

\maketitle

\begin{abstract}
The qubit routing problem, also known as the swap minimization problem, is a (classical) combinatorial optimization problem that arises in the design of compilers of quantum programs.
We study the qubit routing problem from the viewpoint of theoretical computer science, while most of the existing studies investigated the practical aspects.
We concentrate on the linear nearest neighbor (LNN) architectures of quantum computers, in which the graph topology is a path.
Our results are three-fold.
(1) We prove that the qubit routing problem is NP-hard.
(2) We give a fixed-parameter algorithm when the number of two-qubit gates is a parameter.
(3) We give a polynomial-time algorithm when each qubit is involved in at most one two-qubit gate.

\noindent
Key words: Qubit routing, Qubit allocation, Fixed-parameter tractability
\end{abstract}

\section{Introduction}

The qubit routing problem captures a (classical) combinatorial problem in designing compilers of quantum programs.
We rely on the formulation introduced by Siraichi et al.~\cite{DBLP:conf/cgo/SiraichiSCP18}.
In their setting, a quantum program is designed as a quantum circuit.
In a quantum circuit, there are wires and quantum gates such as Hadamard gates and controlled NOT gates. 
Each wire corresponds to one quantum bit (or a qubit for short) and gates operate on one or more qubits at a time.
A quantum circuit is designed at the logic level and needs to be implemented at the physical level.
This requires a mapping of logical qubits to physical qubits in such a way that all the gate operations can be performed.
However, due to physical restrictions, some sets of qubits could be mapped to places where an operation on those qubits cannot be physically performed.
This problem is essential for some of the superconducting quantum computers such as IBM Quantum systems.
Figure~\ref{fig:almaden1} shows the graph topology of such computers.

\begin{figure}[t]
    \centering
    \includegraphics[scale=0.75]{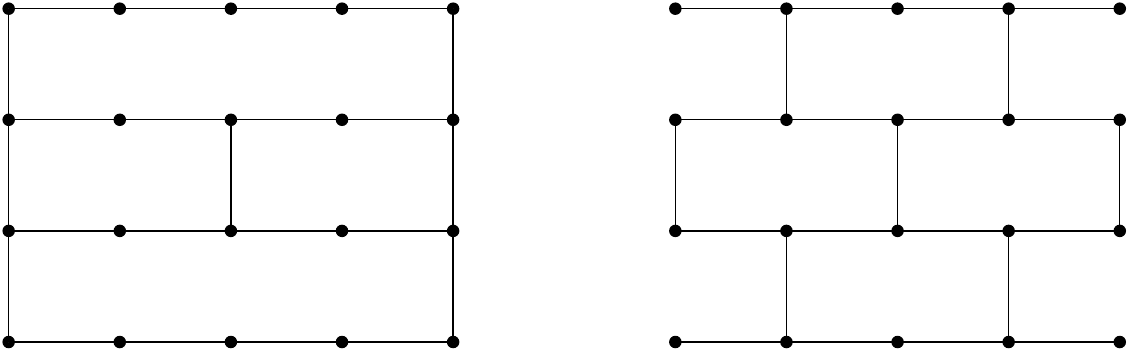}
    \caption{The graph topology of IBM quantum devices called ``Johannesburg'' (left) and ``Almaden'' (right). Each vertex represents a physical qubit, and we may perform a two-input gate operation only for a pair of adjacent qubits. In our problem formulation, the graph topology is taken into account as the graph $G$. Source: https://www.ibm.com/blogs/research/2019/09/quantum-computation-center/}
    \label{fig:almaden1}
\end{figure}

To overcome this issue, we insert so-called swap gates into the quantum circuit. A swap gate can swap two qubits on different wires, and can be implemented by a sequence of three controlled NOT gates.
With swap operations, we will be able to perform all the necessary gate operations that were designed in the original circuit.
The process is divided into two phases.
In the first phase, we find an initial mapping of logical qubits to physical qubits.
In the second phase, we determine how swap operations are performed.
Since a swap operation incurs a cost, we want to minimize the number of swap operations.
The quantum routing problem focuses on the second phase of the process.

Several methods have been proposed to solve the quantum routing problem.
The existing research has mainly focused on the practical aspects of the problem.
On the other hand, the theoretical investigation has been scarce.
Some authors have claimed that the problem is NP-hard, but
no formal proof was given to date.
Until this paper, it was not known whether the problem can be solved in polynomial time for the simplest case where the physical placement of qubits forms a path and the program has no two gate operations that can be performed at the same time.
A path corresponds to the case of the linear nearest neighbor architectures that have been extensively studied in the literature of quantum computing~(e.g.~\cite{DBLP:journals/qip/SaeediWD11}). 

This paper focuses on the theoretical aspects of the quantum routing problem and gives a better understanding of the problem from the viewpoint of theoretical computer science.
We are mainly concerned with the case where the physical placement of qubits forms a path.
Under this restriction, we obtain the following results.
\begin{enumerate}
    \item We prove that the qubit routing problem is NP-hard even when the program has no two gate operations that can be performed at the same time (Theorem~\ref{thm:hardnesspath}).
    \item We give a fixed-parameter algorithm when the number of gates in a given quantum circuit is a parameter (Theorem~\ref{thm:fpt}).
    \item We give a polynomial-time algorithm when each qubit is involved in at most one two-qubit operation (Theorem~\ref{thm:disjoint}). 
\end{enumerate}
As a side result, we also prove that the problem is NP-hard when the physical placement of qubits forms a star and any set of gate operations can be performed at the same time (Theorem~\ref{thm:hard-star}).

\paragraph{Problem Formulation}

We formulate the problem \textsc{Qubit Routing} in a purely combinatorial way as follows.
As input, we are given an undirected graph $G = (V, E)$, a partially ordered set (a poset for short) $P=(S, \preceq)$, a set $T$ of tokens, a map $\varphi\colon S \to \binom{T}{2}$, where $\binom{T}{2}$ denotes the set of unordered pairs of $T$, and an initial token placement $f_0\colon V \to T$, which is defined as a bijection.
The undirected graph $G$ corresponds to the physical architecture of a quantum computer in which each vertex corresponds to a physical qubit and an edge corresponds to a pair of qubits on which a gate operation can be performed.
The poset $P$ corresponds to a quantum circuit that we want to run on the quantum computer.
Each token in $T$ corresponds to a logical qubit.
The token placement $f_0$ corresponds to an initial mapping of the logical qubits in $T$ to the physical qubits in $V$ (e.g., as a result of the qubit allocation, see ``Related Work'' below).
For the notational convenience, we regard $f_0$ as a mapping from $V$ to $T$; this is not mathematically relevant since $f_0$ is bijective and to construct a mapping from $T$ to $V$, one uses the inverse $f_0^{-1}$. 
The bijectivity of a token placement is irrelevant since if there are more logic qubits than physical qubits, then we may introduce dummy logical qubits so that their numbers can be equal.
Each element in $S$ corresponds to the pair of logical qubits on which a gate operation is performed.
The correspondence is given by $\varphi$.
We note that $\varphi$ does not have to be injective.
The poset $P$ determines the order along which the gate operations determined by $\varphi$ are performed.
The order is partial since some pairs of operations may be performed independently: in that case, the corresponding elements of $S$ are incomparable in $P$.
An example is given in Figure~\ref{fig:circuit_example1}.

\begin{figure}[t]
    \centering
    \includegraphics{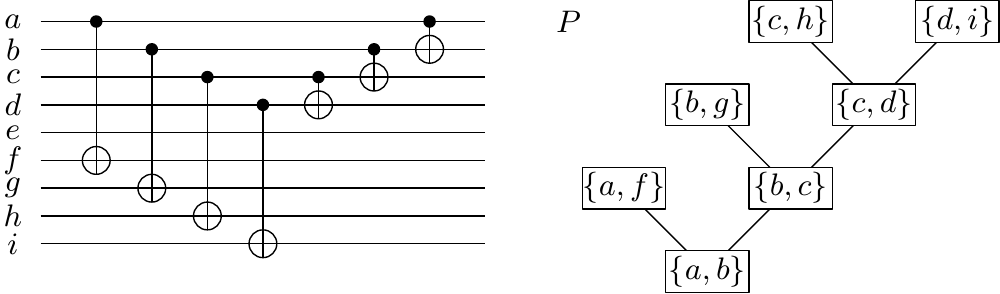}
    \caption{The left figure shows an instance of a quantum circuit, which is a part of the circuits given in \cite{DBLP:journals/qip/AsakaSY20}. The right figure shows the Hasse diagram of the corresponding poset $P$ in our problem formulation.}
    \label{fig:circuit_example1}
\end{figure}

A \emph{swap} $f_i \leadsto f_{i+1}$ is defined as an operation that transforms one token placement (i.e., a bijection) $f_i \colon V \to T$ to another token placement $f_{i+1}\colon V \to T$ such that there exists an edge $\{u,v\} \in E$ such that $f_{i+1}(u)=f_i(v)$, $f_{i+1}(v)=f_i(u)$ and $f_{i+1}(x) = f_i(x)$ for all $x \in V\setminus \{u,v\}$.
This corresponds to a swap operation of two logical qubits $f_i(u)$ and $f_i(v)$ assigned to two different physical qubits $u$ and $v$.

As an output of the qubit routing problem, we want a swap sequence $f_0 \leadsto f_1 \leadsto \dotsm \leadsto f_\ell$ that satisfies the following condition: 
there exists a map $i\colon S \to \{0,1,2,\dotsc,\ell\}$ such that $f_{i(s)}^{-1}(\varphi(s)) \in E$ for every $s \in S$ and if $s \preceq s'$, then $i(s) \leq i(s')$.
A swap sequence with this condition is said to be \emph{feasible}.  
The objective is to minimize the length $\ell$ of a swap sequence.
The condition states that all the operations in the quantum circuit that corresponds to the poset $P$ are performed as they follow the order $P$.

In summary, our problem is described as follows. 

\begin{mdframed}
\noindent
\textsc{Qubit Routing}
\begin{description}
\item[Input.] A graph $G=(V, E)$, a poset $P=(S, \preceq)$, 
a set $T$ of tokens, a map $\varphi\colon S \to \binom{T}{2}$, and an initial token placement $f_0\colon V \to T$. 
\item[Question.] 
Find a feasible swap sequence of minimum length. 
\end{description}
\end{mdframed}

Note that the minimum length of a feasible swap sequence is at most $|V||S|$ (if $G$ is connected), 
which implies that determining the existence of a feasible swap sequence of length bounded by a given number is in NP\@.

\paragraph{Related Work}

The qubit allocation problem was introduced by Siraichi et al.~\cite{DBLP:conf/cgo/SiraichiSCP18}.
Following Nannicini et al.~\cite{10.1145/3544563}, we divide the task in the qubit allocation problem into two phases.
The first phase involves qubit assignment that gives an initial placement of the logical qubits on the physical qubits, and the second phase involves qubit routing that inserts swap operations at appropriate positions so that all the gate operations in a given circuit can be performed.
The problem formulation in this paper solves the second-phase problem.
The qubit routing problem is often called the swap minimization, too.

For qubit routing, several heuristic algorithms have been given~\cite{DBLP:conf/cgo/SiraichiSCP18,DBLP:journals/pacmpl/SiraichiSCP19,DBLP:conf/asplos/LiDX19}, and integer-programming formulations have been given~\cite{10.1145/3544563,DBLP:journals/qip/HouteMACP20} that attempt to solve problem instances to optimality.
Siraichi et al.~\cite{DBLP:conf/cgo/SiraichiSCP18,DBLP:journals/pacmpl/SiraichiSCP19} have claimed that the qubit routing problem is NP-hard since it is more general than the so-called token swapping problem~\cite{DBLP:journals/tcs/YamanakaDIKKOSS15} that is known to be NP-hard~\cite{DBLP:conf/esa/MiltzowNORTU16,DBLP:journals/algorithmica/BonnetMR18,DBLP:journals/jgaa/KawaharaSY19} even for trees~\cite{DBLP:conf/esa/AichholzerDKLLM22}.
However, their argument was informal and no formal proof was not found.

A similar problem was studied by Botea, Kishimoto, and Marinescu \cite{DBLP:conf/socs/BoteaK018}.
In their problem, a mapping of logical qubits to physical qubits is injective but not bijective, and a logical qubit can only be moved to a physical qubit that is not occupied by another logical qubit.
They proved that the makespan minimization in their setting is NP-hard.


\section{Hardness: Paths and Chains}

In this section, we show that the problem is NP-hard even when $G$ is a path and $P$ is a chain (i.e., a totally ordered set). 

\begin{theorem}
\label{thm:hardnesspath}
 \textsc{Qubit Routing} is NP-hard even when $G$ is  a path and $P$ is a chain.
\end{theorem}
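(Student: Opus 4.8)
The plan is to reduce from a known NP-hard problem whose structure mirrors the one-dimensional, sequential nature of routing on a path with a chain poset. Since on a path the only way to bring two tokens together is to slide them along the line, and since a chain forces the gate operations to be realized in a fixed order, the problem essentially asks: given an initial arrangement of tokens on a line and a prescribed sequence of pairs that must each become adjacent (in order), minimize the total number of adjacent transpositions. This is a scheduling-of-permutations flavored problem, and the natural source of hardness is a problem like \textsc{Optimal Linear Arrangement} or, more promisingly, a problem about sorting by adjacent transpositions with intermediate constraints. I would aim the reduction at a variant of \textsc{Minimum Linear Arrangement} or at the NP-hardness of computing an optimal sequence of adjacent swaps subject to a sequence of ``meeting'' requirements; a clean choice is to encode an instance of a partition-type or feedback-arc-set-type problem.

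First I would set up the correspondence: let $G$ be the path on $n$ vertices $v_1 - v_2 - \dots - v_n$, so a token placement is just a linear order of the $n$ tokens, and a swap is an adjacent transposition. Let $P$ be a chain $s_1 \prec s_2 \prec \dots \prec s_m$, so a feasible solution is a sequence of adjacent transpositions together with checkpoints $0 \le i(1) \le i(2) \le \dots \le i(m) \le \ell$ at which the pair $\varphi(s_j)$ occupies two consecutive path-vertices. The cost is the number of transpositions. Then I would design gadgets: a block of ``anchor'' tokens that are expensive to move (enforced by making them appear in many closely-spaced required pairs, so any deviation is immediately penalized), and ``selector'' tokens whose final relative position encodes a binary choice; the sequence of required adjacencies is then arranged so that the cheapest feasible routing corresponds exactly to a YES-certificate of the source problem, with a threshold $\ell$ separating YES from NO instances. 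The key design principle is that moving a token from one side of a cluster to the other costs an amount proportional to the cluster size, so distances translate linearly into swap counts.

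The main obstacle I expect is controlling the global cost: on a path, swaps can serve ``double duty'' — a single sequence of transpositions that routes one pair into adjacency may, as a side effect, also help or hinder a later pair — so the cost of a feasible sequence is not simply the sum of independent pairwise routing costs, and one must prove a matching lower bound. To handle this I would use an inversion-counting / potential-function argument: track, for a suitable weighting of token pairs, how the number of inversions must change to satisfy each successive adjacency requirement in the chain, and show that the total number of adjacent transpositions is bounded below by the total required change in this potential. The gadget weights would be chosen (geometrically separated, or via a large multiplier on anchor tokens) so that cross-interactions between gadgets cannot reduce the cost below the intended threshold, making the lower bound tight exactly at YES-instances. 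Establishing this lower bound rigorously — i.e., that no clever amortization across the chain beats the threshold — is the technical heart of the proof; the upper bound (exhibiting an explicit feasible swap sequence of length $\ell$ from a source-problem solution) should be routine gadget bookkeeping.
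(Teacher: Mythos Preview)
Your instinct to reduce from \textsc{Optimal Linear Arrangement} and to use ``anchor tokens enforced by many closely-spaced required pairs'' is correct and matches the paper's construction: each vertex $v$ of $H$ becomes a block of $\alpha = 2nm+1$ tokens, a subsequence $Q_v = (\{t_{v,1},t_{v,2}\},\dots,\{t_{v,\alpha-1},t_{v,\alpha}\})$ forces those tokens to sit contiguously (a ``block-aligned'' placement), and an edge $e=\{u,v\}$ is encoded by a single pair $\psi(e)$ requiring adjacency of the two blocks' middle tokens, which costs roughly $\alpha\,|g(u)-g(v)|$ each way for the arrangement $g$ induced by the block order.

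The genuine gap in your plan is the absence of the \emph{repetition trick}, which is what makes both directions of the reduction go through. The paper's required sequence is $R^\beta$ with $R = Q^\gamma \circ \psi(e_1) \circ Q^\gamma \circ \cdots \circ \psi(e_m) \circ Q^\gamma$ and $\beta,\gamma$ large. The outer $\beta$-repetition is essential for the lower bound: by averaging, some contiguous subsequence $\mathbf{f}'$ realizing one copy of $R$ has length at most $|\mathbf{f}|/\beta$, so the uncontrolled cost of moving from the arbitrary initial placement $f_0$ to a good configuration is amortized away --- your proposal offers no mechanism for this, and without it the initial-placement cost can swamp the signal. The inner $Q^\gamma$ padding is what forces a genuinely block-aligned placement to appear between consecutive edge gadgets (since $|\mathbf{f}'| < \gamma$, the subsequence cannot realize $Q^\gamma$ unless some single placement already realizes $Q$). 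With these two ingredients the lower bound is not an inversion-potential argument at all but a rigidity argument: all block-aligned placements visited by $\mathbf{f}'$ must correspond to the \emph{same} bijection $g$ (switching $g$ would cost $\ge \alpha^2 > |\mathbf{f}'|$), and then realizing each $\psi(e_i)$ and returning to block-aligned costs at least $2(\alpha\,|g(u)-g(v)|-1)$, summing directly to the OLA objective. Your potential-function sketch supplies none of these three ingredients, and the ``selector tokens encoding a binary choice'' direction heads away from the construction that actually works here.
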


To prove \ref{thm:hardnesspath}, we first introduce notation.
For a swap sequence  $f_0 \leadsto f_1 \leadsto \dotsm \leadsto f_\ell$, which is denoted by $\mathbf{f}$,  
we say that $f_0$ and $f_\ell$ are the \emph{initial} and \emph{target} token placements of $\mathbf{f}$, respectively. 
We also say that $\mathbf{f}$ is from $f_0$ to $f_\ell$. 
The length $\ell$ of $\mathbf{f}$ is denoted by $|\mathbf{f}|$. 
For two swap sequences  $\mathbf{f}_1$ and $\mathbf{f}_2$, if the target token placement of $\mathbf{f}_1$ is equal to 
the initial token placement of $\mathbf{f}_2$, then its concatenation is denoted by $\mathbf{f}_1 \circ \mathbf{f}_2$.  
Note that in the concatenation of swap sequences, the target token placement of $\mathbf{f}_1$ and the initial token placement of $\mathbf{f}_2$ are identified, and thus $|\mathbf{f}_1 \circ \mathbf{f}_2| = |\mathbf{f}_1| + |\mathbf{f}_2|$.
When the initial and the target token placements of $\mathbf{f}$ coincide, for a positive integer $h$, 
we denote $\mathbf{f}^h  = \mathbf{f} \circ \mathbf{f} \circ \dots \circ \mathbf{f}$, where $\mathbf{f}$ appears $h$ times. 

Throughout this section, we only consider the case where $P$ is a chain. 
For a chain $P = (S, \preceq)$, let $s_1, s_2, \dots, s_{|S|}$ be distinct elements in $S$ such that $s_1 \prec s_2 \prec \dots \prec s_{|S|}$. 
Then, the information of $P$ and $\varphi$ can be represented as a sequence $Q = (q_1, q_2, \dots , q_{|S|})$, where $q_i := \varphi(s_i) \in \binom{T}{2}$ for each $i$. 
We say that a swap sequence  $f_0 \leadsto f_1 \leadsto \dotsm \leadsto f_\ell$ \emph{realizes} $Q$ if 
there exist $0 \le i_1 \le i_2 \le \dots  \le i_{|S|} \le \ell$ such that $f_{i_j}^{-1}(q_j) \in E$ for $j = 1, \dots , |S|$. 
In particular, if the swap sequence consisting of a single token placement $f$ realizes $Q$, then we say that $f$ realizes $Q$. 
With this terminology, when $P$ is a chain, \textsc{Qubit Routing} is to find a shortest swap sequence that realizes a given sequence of token pairs. 
For two sequences $Q_1$ and $Q_2$ of token pairs, its concatenation is denoted by $Q_1 \circ Q_2$. 
For a sequence $Q$ of token pairs and a positive integer $h$, we denote $Q^h = Q \circ Q \circ \dots \circ Q$, where $Q$ appears $h$ times.

To show the NP-hardness of \textsc{Qubit Routing}, we reduce \textsc{Optimal Linear Arrangement}, which is known to be NP-hard~\cite{GAREY1976237}. 

\begin{mdframed}
\noindent
\textsc{Optimal Linear Arrangement}
\begin{description}
\item[Input.] A graph $H=(V(H), E(H))$ and a positive integer $k$. 
\item[Question.] Is there a bijection $g\colon V(H) \to \{1, 2, \dots , |V(H)|\}$ that satisfies $\sum_{\{u,v\}\in E(H)} |g(u) - g(v)| \le k$?
\end{description}
\end{mdframed}

Suppose that we are given an instance of  \textsc{Optimal Linear Arrangement} that consists of a graph $H=(V(H), E(H))$ and a positive integer $k$. 
Denote $V(H) = \{v_1, v_2, \dots , v_n\}$ and $E(H) = \{e_1, e_2, \dots , e_m\}$, where $n = |V(H)|$ and $m=|E(H)|$. 
We may assume that $k < nm$ since otherwise, any bijection $g$ is a solution to \textsc{Optimal Linear Arragement}. 
Let $\alpha = 2nm+1$, which is an odd integer, and let 
$\beta$ and $\gamma$ be sufficiently large integers (e.g., $\beta = n^2 \alpha$ and $\gamma = 4k\alpha$).

We now construct an instance of \textsc{Qubit Routing} as follows.  
Define a set of tokens as $T=\{t_{v, i} \mid v \in V(H), \ i \in \{1, 2, \dots , \alpha\}\}$. 
Let $G = (V, E)$ be a path with $n \alpha$ vertices. 
We define
\begin{align*}
Q_v &:= (\{t_{v, 1}, t_{v, 2}\}, \{t_{v, 2}, t_{v, 3}\}, \dots , \{t_{v, \alpha-1}, t_{v, \alpha}\}) & & (v \in V(H)), \\
Q    &:= Q_{v_1} \circ Q_{v_2} \circ \dots  \circ Q_{v_n}, & & \\
\psi(e) &:= (\{ t_{u, nm+1}, t_{v, nm+1} \}) & & (e = \{u, v\} \in E(H)), \\
R &:= Q^\gamma \circ \psi(e_1) \circ Q^\gamma \circ \psi(e_2) \circ \dots \circ \psi(e_m) \circ Q^\gamma, & & 
\end{align*}
and $R^\beta$ is the sequence of token pairs that has to be realized. 
The initial token placement $f_0$ is defined arbitrarily. 
This gives an instance $(G, T,  R^\beta, f_0)$ of \textsc{Qubit Routing}.

To show the validity of the reduction, it suffices to show the following.
\begin{proposition}
\label{prop:reductioncorrectness}
The above instance $(G, T, R^\beta, f_0)$ of \textsc{Qubit Routing} has a solution of length less than $2 \beta (k \alpha + \alpha - m)$
if and only if the original instance of \textsc{Optimal Linear Arrangement} has a solution of objective value at most $k$. 
\end{proposition}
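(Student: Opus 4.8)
The plan is to prove both implications by understanding the token placements that realize the sequence $Q$. Since $G$ is a path, a single placement $f$ realizes $Q$ (as a length-$0$ sequence) if and only if for every $v\in V(H)$ the tokens $t_{v,1},\dots,t_{v,\alpha}$ occupy $\alpha$ consecutive vertices of $G$ in this order or in the reverse order; hence such an $f$ tiles $G$ by $n$ consecutive blocks of length $\alpha$, one per vertex of $H$, and thus induces a bijection $g_f\colon V(H)\to\{1,\dots,n\}$. Call such an $f$ \emph{canonical}. The arithmetic point of the choice $\alpha=2nm+1$ is that $nm+1$ is the central index of a block, so $t_{v,nm+1}$ occupies the center of $v$'s block in every canonical placement with bijection $g_f$, irrespective of the block's orientation; consequently the $G$-distance between $t_{u,nm+1}$ and $t_{v,nm+1}$ is exactly $\alpha\,|g_f(u)-g_f(v)|$.

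For the direction ``\textsc{Optimal Linear Arrangement} has value $\le k$ $\Rightarrow$ short swap sequence'', fix $g$ with $\sum_{\{u,v\}\in E(H)}|g(u)-g(v)|\le k$ and let $f^{*}$ be the canonical placement with $g_{f^{*}}=g$. Go from $f_0$ to $f^{*}$ with at most $\binom{n\alpha}{2}$ swaps; then realize $R^{\beta}$ by, in each copy of $R$, realizing every block $Q^{\gamma}$ for free at $f^{*}$ (a canonical placement realizes $Q$, hence $Q^{\gamma}$) and realizing each $\psi(e_i)$ by sliding $t_{u_i,nm+1}$ to $t_{v_i,nm+1}$ and back, costing $2(\alpha|g(u_i)-g(v_i)|-1)$. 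The total length is at most $\binom{n\alpha}{2}+2\beta\bigl(\alpha\sum_{e}|g(u)-g(v)|-m\bigr)\le\binom{n\alpha}{2}+2\beta(k\alpha-m)$, and this is smaller than $2\beta(k\alpha+\alpha-m)$ because $\beta=n^{2}\alpha$ makes the slack $2\beta\alpha$ exceed the one-time cost $\binom{n\alpha}{2}$.

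For the converse, given a feasible swap sequence of length $L<2\beta(k\alpha+\alpha-m)$, average over the $\beta$ occurrences of $R$ (their realization windows are nested in order) to find one whose realization uses at most $B:=2(k\alpha+\alpha-m)-1$ swaps; then, inside that window, average over the $\gamma$ occurrences of $Q$ in each of the $m+1$ blocks $Q^{\gamma}$ -- using $B<\gamma=4k\alpha$, valid since $k\ge1$ -- to obtain canonical placements $g^{(0)},\dots,g^{(m)}$ realizing $Q$, in this order, with $\psi(e_i)$ realized between $g^{(i-1)}$ and $g^{(i)}$. The crucial claim is that all $g^{(i)}$ induce the same bijection $g$: passing between canonical placements with different bijections forces a whole $\alpha$-block of some vertex to change block, so the total $G$-displacement of all tokens is at least $2\alpha^{2}$ and hence at least $\alpha^{2}$ swaps are needed, which is impossible since $B<\alpha^{2}$ -- this last inequality is exactly where $\alpha=2nm+1$ and the assumption $k<nm$ are used. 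Granting this, let $w_i$ count the swaps between $g^{(i-1)}$ and $g^{(i)}$; since $\psi(e_i)$ is realized in between, the $G$-distance between $t_{u_i,nm+1}$ and $t_{v_i,nm+1}$ drops from $\alpha|g(u_i)-g(v_i)|$ to $1$ and back, and it changes by at most $1$ per swap, so $w_i\ge2\alpha|g(u_i)-g(v_i)|-2$. Summing, $2\alpha\sum_{e}|g(u)-g(v)|-2m\le\sum_i w_i\le B<2(k\alpha+\alpha-m)$, whence $\sum_{e}|g(u)-g(v)|<k+1$, i.e.\ $g$ witnesses a value at most $k$.

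The main obstacle is the structural claim that all checkpoints $g^{(i)}$ share one bijection: this needs the $\Omega(\alpha^{2})$ displacement lower bound for re-permuting the $\alpha$-blocks along the path together with the verification $B<\alpha^{2}$ from $k<nm$. Everything else -- the two nested averaging arguments, the free realization of $Q^{\gamma}$ at a canonical placement, and the per-swap change of a single pairwise distance -- is routine bookkeeping.
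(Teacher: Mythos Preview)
Your proposal is correct and follows essentially the same approach as the paper's proof: the same canonical/block-aligned placements, the same averaging over the $\beta$ copies of $R$ and then the $\gamma$ copies of $Q$, the same $\alpha^2$ displacement bound forcing a common bijection (the paper tracks the signed position-sum of a single block while you use total displacement over all blocks, but both yield $\ge \alpha^2$ swaps), and the same per-edge $2(\alpha|g(u)-g(v)|-1)$ bounds in both directions. The arithmetic checks ($\binom{n\alpha}{2}<2\beta\alpha$, $B<\gamma$, $B<\alpha^2$ via $k<nm$) all go through exactly as in the paper.
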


\subsection{Proof  of Proposition~\ref{prop:reductioncorrectness}}

To simpify the notation, we regard the vertex set $V$ of $G$ as $\{1, 2, \dots, n \alpha \}$ so that 
$\{i, i+1\} \in E$ for $i \in \{1, 2, \dots , n \alpha -1\}$. 
We say that a token placement $f \colon V \to T$ is \emph{block-aligned} if 
$|f^{-1}(t_{v, i}) - f^{-1}(t_{v, i+1})| =1$ for any $v \in V$ and for any $i \in \{1, 2, \dots , \alpha-1\}$. 
In other words, $f$ is block-aligned if and only if, for any $v \in V(H)$, the vertices $f^{-1}(t_{v, 1}), f^{-1}(t_{v, 2}), \dots , f^{-1}(t_{v, \alpha})$ appear consecutively in this order  (or in the reverse order) on the path $G$. 
Observe that $f$ is block-aligned if and only if it realizes $Q$. 
In what follows, we show the sufficiency and the necessity in Proposition~\ref{prop:reductioncorrectness} separately.

\subsubsection*{Sufficiency (``if'' part)}

Suppose that there exists 
a bijection $g\colon V(H) \to \{1, 2, \dots , |V(H)|\}$ such that $\sum_{\{u,v\}\in E(H)} |g(u) - g(v)| \le k$. 
Define a token placement 
$f^* \colon V \to T$ as $f^*((g(v) - 1) \alpha + i) = t_{v, {i}}$ for $v \in V(H)$ and $i \in \{1, 2, \dots , \alpha\}$. 
Then, $f^*$ is block-aligned, and hence $f^*$ realizes $Q$. 
This implies that $f^*$ realizes $Q^\gamma$, too. 

We use the following two lemmas. 

\begin{lemma}
\label{clm:11}
There exists a swap sequence $\mathbf{f}_0$ from $f_0$ to $f^*$ whose length is less than $(n \alpha)^2$. 
\end{lemma}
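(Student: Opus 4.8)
The statement is that any two token placements on a path with $N := n\alpha$ vertices are connected by a swap sequence of length less than $N^2$. Since the problem is to move from the arbitrary initial placement $f_0$ to the specific block-aligned placement $f^*$, but actually any target would do, the plan is to prove the general fact: for a path $G$ on $N$ vertices and any two bijections $f_0, f^* \colon V \to T$, there is a swap sequence from $f_0$ to $f^*$ of length at most $\binom{N}{2} < N^2$.

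The approach is a standard sorting argument, essentially bubble sort / selection sort realized by adjacent transpositions. Identify $V$ with $\{1, 2, \dots, N\}$ as in the paper, and consider the permutation $\pi$ of $\{1,\dots,N\}$ defined by $\pi = (f^*)^{-1} \circ f_0$, so that $\pi$ tells us, for each vertex, where its current token needs to end up. A swap across edge $\{i, i+1\}$ corresponds to composing with the adjacent transposition $(i\ \ i+1)$. It is classical that any permutation of $\{1,\dots,N\}$ can be written as a product of at most $\binom{N}{2}$ adjacent transpositions — indeed exactly $\mathrm{inv}(\pi)$ of them, where $\mathrm{inv}(\pi)$ is the number of inversions, and $\mathrm{inv}(\pi) \le \binom{N}{2}$. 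I would either cite this directly or give the one-line inductive proof: repeatedly locate the token destined for vertex $1$, move it leftward to vertex $1$ by a run of adjacent swaps (at most $N-1$ of them), then recurse on the path $\{2, \dots, N\}$, giving the bound $(N-1) + (N-2) + \dots + 1 = \binom{N}{2}$.

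The only thing to check is that this clean bound beats the required $N^2 = (n\alpha)^2$: since $\binom{N}{2} = N(N-1)/2 < N^2$, the swap sequence $\mathbf{f}_0$ produced has $|\mathbf{f}_0| < (n\alpha)^2$, as claimed. I do not anticipate a genuine obstacle here — the lemma is a warm-up whose role is merely to bound the cost of reaching the canonical starting configuration $f^*$, and the content is entirely the elementary fact that adjacent transpositions generate the symmetric group with the inversion count as word length. The only mild care needed is bookkeeping the direction of composition (whether we track where tokens go or which token sits where), but this does not affect the count. So the write-up will be short: set up the correspondence between swaps and adjacent transpositions, invoke (or reprove in one line) the inversion bound, and conclude $|\mathbf{f}_0| \le \binom{n\alpha}{2} < (n\alpha)^2$.
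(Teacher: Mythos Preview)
Your proposal is correct and follows essentially the same approach as the paper: both arguments move tokens into place one at a time via runs of adjacent swaps (selection sort on the path). The paper bounds the cost crudely by $(n\alpha-1)\cdot n\alpha$ (at most $n\alpha-1$ swaps per vertex, repeated for all $n\alpha$ vertices), whereas you track the shrinking path to get the sharper $\binom{n\alpha}{2}$; either way the bound is below $(n\alpha)^2$.
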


\begin{proof}
Since $|V| = |T| = n \alpha$, by applying swaps at most $n \alpha - 1$ times to $f_0$,
we obtain a token placement $f_1$ such that an end vertex $v$ of the path $G$ satisfies $f_1(v) = f^*(v)$. 
By applying the same operation for each vertex $v$ in $G$ from one end to the other, we obtain $f^*$. 
The total number of token swaps is at most $(n\alpha - 1) |V| < (n \alpha)^2$. 
\end{proof}

\begin{lemma}
\label{clm:12}
For any $e = \{u, v\} \in E(H)$, there exists a swap sequence  $\mathbf{f}_e$ from $f^*$ to $f^*$ such that 
$\mathbf{f}_e$ realizes $\psi(e) = (\{ t_{u, nm+1}, t_{v, nm+1} \})$ and $|\mathbf{f}_e| = 2 | g(u) - g(v)| \alpha - 2$. 
\end{lemma}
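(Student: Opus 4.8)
The plan is to realize $\psi(e)=(\{t_{u,nm+1},t_{v,nm+1}\})$ by ``bubbling'' a single token to the right and then back to the left, starting and ending at $f^*$. Write $d:=|g(u)-g(v)|$ and assume without loss of generality that $g(u)<g(v)$, so $d\ge 1$. By definition of $f^*$, for each $w\in\{u,v\}$ the tokens $t_{w,1},\dots,t_{w,\alpha}$ occupy exactly the vertices $(g(w)-1)\alpha+1,\dots,g(w)\alpha$ in this order; in particular $t_{u,nm+1}$ sits on vertex $p_0:=(g(u)-1)\alpha+nm+1$ and $t_{v,nm+1}$ sits on vertex $p_1:=(g(v)-1)\alpha+nm+1$, and $p_1-p_0=d\alpha$. (These are legitimate tokens because $nm+1\le 2nm+1=\alpha$.)

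Next I would define the rightward pass as the fixed list of swaps on the edges $\{p_0,p_0+1\},\{p_0+1,p_0+2\},\dots,\{p_1-2,p_1-1\}$, i.e.\ $d\alpha-1$ swaps. A straightforward induction on the number of swaps performed shows that after the first $j$ of them the token $t_{u,nm+1}$ occupies vertex $p_0+j$, no swap has been incident to any vertex $\ge p_1$, and every other token's position is unchanged except for a left shift by one of the tokens that were on $p_0+1,\dots,p_0+j$. After all $d\alpha-1$ swaps, $t_{u,nm+1}$ is on vertex $p_1-1$ while $t_{v,nm+1}$ is still on $p_1$; since $\{p_1-1,p_1\}\in E$, this prefix realizes $\psi(e)$. (Here one uses $g(v)\le n$ and $nm<\alpha$ to see $p_1-1$ is a genuine vertex of $G$ lying in block $v$.) Then I would append the reversal of the rightward pass — the same $d\alpha-1$ edges listed in the opposite order, which is exactly bubbling $t_{u,nm+1}$ back to vertex $p_0$. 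Because each swap is an involution on token placements, performing a list of swaps and then its reversal returns to the starting placement, so the concatenated sequence $\mathbf{f}_e$ goes from $f^*$ to $f^*$. Its length is $2(d\alpha-1)=2|g(u)-g(v)|\alpha-2$, as claimed.

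There is no real obstacle here: the construction is elementary and the only work is the bookkeeping in the induction above, namely tracking the location of $t_{u,nm+1}$ after each swap and confirming that the leftward pass is literally the inverse of the rightward pass. Phrasing $\mathbf{f}_e$ from the outset as a fixed edge-list followed by its reversal makes the involution argument immediate and isolates all the index arithmetic into the single claim $p_1-p_0=d\alpha$ together with the bounds $nm+1\le\alpha$ and $g(v)\le n$.
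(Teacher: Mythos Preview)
Your proof is correct and follows essentially the same approach as the paper: move one of the two central tokens $|g(u)-g(v)|\alpha-1$ steps toward the other so that they become adjacent, then reverse the same swaps to return to $f^*$. Your version simply spells out explicitly which edges are swapped and adds the bookkeeping (the induction on $j$, the involution argument, and the bound checks), whereas the paper leaves these routine details implicit.
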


\begin{proof} 
By applying swaps $|(f^*)^{-1}(t_{u, nm+1}) - (f^*)^{-1}( t_{v, nm+1})| - 1 = |g(u) - g(v)| \alpha - 1$ times to $f^*$, 
we can obtain a token placement $f_e$ such that $(f_e)^{-1}(t_{u, nm+1})$ and $(f_e)^{-1}(t_{v, nm+1})$ are adjacent, i.e., $f_e$ realizes $\psi(e)$. 
Conversely, $f_e$ can be transformed to $f^*$ by applying $|g(u) - g(v)| \alpha - 1$ swaps. 
Therefore, there exists a swap sequence of length $2 | g(u) - g(v)| \alpha - 2$ that contains $f^*, f_e, f^*$ in this order, which completes the proof. 
\end{proof}

We now show that the following swap sequence satisfies the conditions: 
\[
\mathbf{f} := \mathbf{f}_0 \circ ( \mathbf{f}_{e_1} \circ \mathbf{f}_{e_2} \circ \dots \circ \mathbf{f}_{e_m})^\beta, 
\]
where $\mathbf{f}_0$ and $\mathbf{f}_{e_i}$ are as in Lemmas~\ref{clm:11} and \ref{clm:12}, respectively. 
Since $\mathbf{f}_{e_1} \circ \mathbf{f}_{e_2} \circ \dots \circ \mathbf{f}_{e_m}$ realizes $R$, we see that $\mathbf{f}$ realizes $R^\beta$. 
Furthermore, we obtain
\begin{align*}
|\mathbf{f}| 
&= |\mathbf{f}_0| + \beta \sum_{i=1}^m | \mathbf{f}_{e_i}|  \\ 
&< (n \alpha)^2 + \beta \sum_{\{u, v\} \in E(H)} (2 | g(u) - g(v)| \alpha - 2) \\
&\le (n \alpha)^2  + 2 \beta ( k \alpha -  m)  \\ 
&< 2 \beta ( \alpha + k \alpha -  m ). 
\end{align*}
This shows that $\mathbf{f}$ is a desired swap sequence. 

\subsubsection*{Necessity (``only if'' part)}

To show the necessity, we first show a few properties of block-aligned token placements. 

\begin{lemma}\label{clm:01}
If a token placement $f \colon V \to T$ is block-aligned, then there exists a bijection $g \colon V(H) \to \{1, 2, \dots , n\}$ such that 
\begin{equation}
f^{-1}(t_{v, {nm+1}}) = (g(v) - 1) \alpha + nm + 1 \text{ for any } v \in V(H). \label{eq:01}
\end{equation} 
\end{lemma}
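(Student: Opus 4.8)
The plan is to exploit the fact that $\alpha = 2nm+1$ is odd, so $nm+1$ is exactly the \emph{middle} index of $\{1,2,\dots,\alpha\}$; this index is the unique fixed point of the reversal $i \mapsto \alpha+1-i$, and therefore the position of the token $t_{v,nm+1}$ inside its block does not depend on whether that block is laid out in forward or reverse order. Everything else is bookkeeping.

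First I would record the structural consequence of block-alignedness. For each $v \in V(H)$ set $B_v := \{\, f^{-1}(t_{v,i}) : i \in \{1,\dots,\alpha\} \,\} \subseteq V$. By the definition of block-aligned, the vertices $f^{-1}(t_{v,1}), \dots, f^{-1}(t_{v,\alpha})$ occur consecutively on the path $G$ (in this order or in the reverse order), so $B_v$ is a set of $\alpha$ consecutive integers in $\{1,2,\dots,n\alpha\}$. Since $f$ is a bijection and the sets $\{t_{v,i} : i\}$ over $v \in V(H)$ partition $T$, the family $\{B_v : v \in V(H)\}$ is a partition of $\{1,2,\dots,n\alpha\}$ into $n$ blocks of $\alpha$ consecutive integers each.

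Next I would argue that such a partition is forced. The smallest element $1$ lies in $B_v$ for a unique $v \in V(H)$; as $B_v$ is a block of $\alpha$ consecutive integers containing $1$, necessarily $B_v = \{1,2,\dots,\alpha\}$. Deleting $B_v$ and arguing inductively on $\{\alpha+1,\dots,n\alpha\}$ yields a bijection $g\colon V(H) \to \{1,\dots,n\}$ with $B_v = \{(g(v)-1)\alpha+1,\ (g(v)-1)\alpha+2,\ \dots,\ g(v)\alpha\}$ for every $v \in V(H)$.

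Finally I would locate the middle token. Fix $v$ and write $b := (g(v)-1)\alpha$. Within $B_v = \{b+1,\dots,b+\alpha\}$, block-alignedness says the tokens $t_{v,1},\dots,t_{v,\alpha}$ appear either in increasing order of position, giving $f^{-1}(t_{v,i}) = b+i$, or in decreasing order, giving $f^{-1}(t_{v,i}) = b+\alpha+1-i$. Substituting $i = nm+1$ and $\alpha = 2nm+1$, the first case gives $b+nm+1$ and the second gives $b+(2nm+1)+1-(nm+1) = b+nm+1$ as well, so in both cases $f^{-1}(t_{v,nm+1}) = (g(v)-1)\alpha + nm + 1$, which is \eqref{eq:01}. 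No step is a real obstacle; the only point requiring care is that $\alpha$ is odd, which is exactly what makes $nm+1$ the reversal-invariant middle index.
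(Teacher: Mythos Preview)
Your proposal is correct and follows essentially the same approach as the paper: establish that the blocks $B_v$ form the unique partition of $\{1,\dots,n\alpha\}$ into consecutive intervals of length $\alpha$, thereby defining $g$, and then use that $nm+1 = (\alpha+1)/2$ is the middle index so its position is the same whether the block is laid out forward or reversed. Your write-up is slightly more explicit than the paper's (you spell out the inductive argument for the forced partition and compute both orientations), but the argument is the same.
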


\begin{proof}
Let $f \colon V \to T$ be a block-aligned token placement. 
Since $f^{-1}(t_{v, 1})$, $f^{-1}(t_{v, 2}), \dots , f^{-1}(t_{v, \alpha})$ appear consecutively in $G$ for every $v \in V(H)$, 
there exists a bijection $g \colon V(H) \to \{1, 2, \dots , n\}$ such that
\[
\{f^{-1}(t_{v, 1}), f^{-1}(t_{v, 2}), \dots , f^{-1}(t_{v, \alpha})\} = \{ (g(v) - 1) \alpha + 1, (g(v) - 1) \alpha + 2, \dots , g(v) \alpha \}
\]
for $v \in V(H)$. 
Since $f^{-1}(t_{v, 1}), f^{-1}(t_{v, 2}), \dots , f^{-1}(t_{v, \alpha})$ appear in this order or in the reverse order, 
$f^{-1}(t_{v, {nm+1}})$ has to be located in the middle of them in either case, where we note that $nm+1 = (\alpha +1)/2$. 
Therefore, $f^{-1}(t_{v, {nm+1}}) = (g(v) - 1) \alpha + nm + 1$. 
\end{proof}

We say that 
a block-aligned token placement $f \colon V \to T$ \emph{corresponds to} a bijection $g \colon V(H) \to \{1, 2, \dots , n\}$ 
if  (\ref{eq:01}) holds.

\begin{lemma}
\label{clm:05}
Let $g_1, g_2 \colon V(H) \to \{1, 2, \dots , n\}$ be bijections with $g_1 \neq g_2$. 
Suppose that $f_1$ and $f_2$ are block-aligned token placements that correspond to $g_1$ and $g_2$, respectively. 
Then, any swap sequence from $f_1$ and $f_2$ is of length at least $\alpha^2$. 
\end{lemma}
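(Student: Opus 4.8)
The plan is to show that transforming $f_1$ into $f_2$ requires ``rearranging the blocks'' in a way that forces at least $\alpha^2$ swaps, using a potential-function / counting argument. Since $g_1 \neq g_2$, there is at least one vertex $w \in V(H)$ with $g_1(w) \neq g_2(w)$, and in fact (since $g_1, g_2$ are bijections onto the same set) there are at least two such vertices, and the ``blocks'' of these vertices must move by a nontrivial amount along the path. Concretely, the block of $w$ in $f_1$ occupies the $\alpha$ consecutive vertices $\{(g_1(w)-1)\alpha+1, \dots, g_1(w)\alpha\}$ and in $f_2$ the $\alpha$ consecutive vertices $\{(g_2(w)-1)\alpha+1, \dots, g_2(w)\alpha\}$, and these two intervals are disjoint (they are distinct length-$\alpha$ blocks in the natural partition of $\{1,\dots,n\alpha\}$ into $n$ consecutive blocks).

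First I would set up the standard lower bound for token swapping / sorting by adjacent transpositions: for any swap sequence from $f_1$ to $f_2$, the number of swaps is at least $\sum_{t \in T} |f_1^{-1}(t) - f_2^{-1}(t)| / 2$, since a single swap on edge $\{i,i+1\}$ changes this sum by at most $2$ (it moves two tokens by one position each). Then I would lower-bound $\sum_{t} |f_1^{-1}(t) - f_2^{-1}(t)|$. Pick a vertex $w$ with $g_1(w) \neq g_2(w)$; say without loss of generality $g_2(w) \ge g_1(w) + 1$. Every token $t_{w,i}$ for $i \in \{1,\dots,\alpha\}$ has $f_1^{-1}(t_{w,i})$ in the block indexed $g_1(w)$ and $f_2^{-1}(t_{w,i})$ in the block indexed $g_2(w)$, so $|f_1^{-1}(t_{w,i}) - f_2^{-1}(t_{w,i})| \ge (g_2(w) - g_1(w) - 1)\alpha + 1 \ge 1$. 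That alone only gives $\ge \alpha$ over the $\alpha$ tokens of $w$, which is too weak; the point is that when $w$'s block moves to the right, the $\alpha$ vertices it vacates must be filled by tokens coming from the right, so there is a matching ``mass'' of displacement on the other side.

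The cleaner way to capture this, which I would actually carry out, is a cut argument: for each boundary position $p \in \{1,\dots,n\alpha-1\}$ between consecutive blocks (i.e. $p = j\alpha$ for $j=1,\dots,n-1$), count $c_p$, the number of tokens $t$ with $f_1^{-1}(t) \le p < f_2^{-1}(t)$ or $f_2^{-1}(t) \le p < f_1^{-1}(t)$; then $\sum_t |f_1^{-1}(t)-f_2^{-1}(t)| = \sum_{p=1}^{n\alpha-1} c_p \ge \sum_{j=1}^{n-1} c_{j\alpha}$, and each swap changes $\sum_p c_p$ by at most $2$ (it affects only the single cut between the swapped positions), so the number of swaps is at least $\tfrac12\sum_{j} c_{j\alpha}$. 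Now I claim that if $g_1 \neq g_2$ then some cut $c_{j\alpha} \ge 2\alpha$: take $j$ to be the largest index such that the multiset of block-labels $\{g_1^{-1}(1),\dots,g_1^{-1}(j)\}$ differs from $\{g_2^{-1}(1),\dots,g_2^{-1}(j)\}$ as sets — equivalently, the set of vertices of $H$ assigned to the first $j$ blocks differs between $g_1$ and $g_2$; since the two assignments differ, such $j$ exists, and for that $j$ there is a vertex $w$ in the first-$j$ blocks under $g_1$ but not under $g_2$, and (by a counting balance) a vertex $w'$ in the first-$j$ blocks under $g_2$ but not under $g_1$. All $\alpha$ tokens $t_{w,i}$ cross the cut at $p=j\alpha$ in one direction and all $\alpha$ tokens $t_{w',i}$ cross it in the other, giving $c_{j\alpha} \ge 2\alpha$ and hence at least $\alpha^2$ swaps.

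The main obstacle is pinning down the ``some cut has $c_{j\alpha} \ge 2\alpha$'' claim cleanly — i.e. choosing the right block boundary and arguing that a whole block's worth of tokens ($\alpha$ of them) plus a compensating block's worth ($\alpha$ more) must cross it. Once the boundary is chosen correctly the crossing count is immediate from block-alignment (all $\alpha$ tokens of a given $H$-vertex sit together in one block at both ends), and the rest is the routine adjacent-transposition lower bound. I would be careful to phrase the compensation argument via cardinalities: the first $j$ blocks hold exactly $j\alpha$ tokens under both $f_1$ and $f_2$, so the number of $H$-vertices whose block index is $\le j$ is exactly $j$ under each of $g_1, g_2$; if these $j$-element vertex sets differ, each contains a vertex absent from the other, supplying $w$ and $w'$.
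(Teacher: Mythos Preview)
Your overall strategy---find a vertex $w$ whose block must relocate and turn that into a displacement lower bound---is sound, but the final counting step does not give what you claim. From your setup, the number of swaps is at least $\tfrac12\sum_{p} c_p \ge \tfrac12\sum_{j} c_{j\alpha}$. You then show that \emph{one} boundary cut satisfies $c_{j\alpha}\ge 2\alpha$ and conclude ``hence at least $\alpha^2$ swaps.'' But a single term of size $2\alpha$ in that sum only yields $\tfrac12\cdot 2\alpha=\alpha$ swaps, not $\alpha^2$. Knowing that $2\alpha$ tokens cross one fixed cut says nothing about how far each of them travels; a token at the right end of block $g_1(w)$ could in principle land at the left end of block $g_1(w)+1$, a distance of $1$.

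The fix is immediate once you stop restricting to block-boundary cuts and instead sum the displacements of $w$'s tokens directly. Under $f_1$ the tokens $t_{w,1},\dots,t_{w,\alpha}$ occupy \emph{all} of block $g_1(w)$, and under $f_2$ they occupy all of block $g_2(w)$; since $g_1(w)\neq g_2(w)$ these blocks are disjoint and every displacement has the same sign, so
\[
\sum_{i=1}^{\alpha}\bigl|f_1^{-1}(t_{w,i})-f_2^{-1}(t_{w,i})\bigr|
=\Bigl|\sum_{i=1}^{\alpha}\bigl(f_1^{-1}(t_{w,i})-f_2^{-1}(t_{w,i})\bigr)\Bigr|
=\alpha^2\,\bigl|g_1(w)-g_2(w)\bigr|\ge\alpha^2.
\]
Together with the analogous $\alpha^2$ from $w'$ you get $\sum_t|f_1^{-1}(t)-f_2^{-1}(t)|\ge 2\alpha^2$, and your $\tfrac12$-bound then gives $\ge\alpha^2$ swaps. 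The paper's proof is a slightly sharper version of this same computation: it takes a single $v$ with $g_1(v)>g_2(v)$ and uses the potential $\Phi(f)=\sum_i f^{-1}(t_{v,i})$, which changes by at most $1$ per swap (a swap moves at most one $t_{v,\cdot}$ token, and if it moves two of them their contributions cancel), so no second vertex and no factor $\tfrac12$ are needed.
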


\begin{proof}
Since $g_1 \neq g_2$, there exists a vertex $v \in V(H)$ with $g_1(v) > g_2(v)$. 
Since
\[\{f_1^{-1}(t_{v, 1}), f_1^{-1}(t_{v, 2}), \dots , f_1^{-1}(t_{v, \alpha})\} = \{ (g_1(v) - 1) \alpha + 1, (g_1(v) - 1) \alpha + 2, \dots , g_1(v) \alpha \}\]
and 
\[\{f_2^{-1}(t_{v, 1}), f_2^{-1}(t_{v, 2}), \dots , f_2^{-1}(t_{v, \alpha})\} = \{ (g_2(v) - 1) \alpha + 1, (g_2(v) - 1) \alpha + 2, \dots , g_2(v) \alpha \},\] 
the length of any swap sequence from $f_1$ and $f_2$ is at least 
\begin{align*}
\sum_{i =1}^\alpha (f_1^{-1}(t_{v, i}) - f_2^{-1}(t_{v, i}) ) =  \alpha^2 (g_1(v) - g_2(v)) \ge \alpha^2, 
\end{align*}
which completes the proof. 
\end{proof}

Suppose that there exists a swap sequence $\mathbf{f}$ of length less than $2 \beta (k \alpha + \alpha -  m )$ that realizes $R^\beta$.
Then, there exists a subsequence $\mathbf{f'}$ of $\mathbf{f}$ such that
$\mathbf{f'}$ realizes $R$ and $| \mathbf{f'} | \le |\mathbf{f}| / \beta < 2 (k \alpha + \alpha - m)$. 
Since $|\mathbf{f'}| < 2 (k+1) \alpha \le \gamma$, if a subsequence of  $\mathbf{f'}$ realizes $Q^\gamma$, then it contains a token placement that realizes $Q$. 
With this observation, we see that 
$\mathbf{f'}$ contains token placements $f^*_1, f_{e_1}, f^*_2, f_{e_2}, f^*_3, \dots , f_{e_m}, f^*_{m+1}$ in this order, where 
$f^*_i$ realizes $Q$ (i.e., it is block-aligned) and $f_{e_i}$ realizes $\psi(e_i)$ for each $i$.

For $i \in \{1, 2, \dots , m+1\}$, Lemma~\ref{clm:01} shows that  $f^*_i$ corresponds to some bijection $g_i \colon V(H) \to \{1, 2, \dots , n\}$. 
Furthermore, since $| \mathbf{f'} | < 2 (k+1) \alpha \le \alpha^2$, Lemma~\ref{clm:05} shows that
$g_1 = g_2 = \dots = g_{m+1}$. That is,  $f^*_1, f^*_2, \dots , f^*_{m+1}$ correspond to a common bijection $g \colon V(H) \to \{1, 2, \dots , |V(H)|\}$. 

We now show that $g$ is a desired bijection. 
For every $e_i = \{u, v\}$, 
any swap sequence from $f^*_i$ to $f_{e_i}$ has length at least $|(f^*_i)^{-1}(t_{u, nm+1}) - (f^*_i)^{-1}(t_{v, nm+1})| - 1 = \alpha |g(u) - g(v)| - 1$ as $f^*_i$ corresponds to $g$. 
Similarly, any swap sequence from $f_{e_i}$ to $f^*_{i+1}$ has length at least $|(f^*_{i+1})^{-1}(t_{u, nm+1}) - (f^*_{i+1})^{-1}(t_{v, nm+1})| - 1 = \alpha |g(u) - g(v)| - 1$. 
Therefore, we obtain 
\[
|\mathbf{f'}| \ge \sum_{\{u, v\} \in E(H)} 2 (\alpha |g(u) - g(v)| - 1). 
\]
This together with $| \mathbf{f'} | < 2 (k \alpha + \alpha- m)$ shows that $\sum_{\{u, v\} \in E(H)} |g(u) - g(v)| < k+1$. 
This implies that $\sum_{\{u, v\} \in E(H)} |g(u) - g(v)| \le k$ by integrality, and hence $g$ is a desired bijection. 
This completes the proof of Theorem~\ref{thm:hardnesspath}. 
\qed

\section{Algorithm Parameterized by the Number of Gates}\label{sec:fpt}

In this section, we assume that $G=(V, E)$ is a path.
Let $k=|S|$ be the size of a poset $P=(S, \preceq)$.
The purpose of this section is to design a fixed-parameter algorithm for the problem parameterized by $k$.
Since $G$ is a path, we suppose for simplicity that $V=\{1,2,\dots, n\}$ and $E=\{\{i, i+1\}\mid i=1,2, \dots, n-1\}$.

We first observe that we may assume that a poset forms a chain.
Indeed, suppose that $P$ is not a chain.
Then, if we have a fixed-parameter algorithm for a chain, we can apply the algorithm for all the linear extensions of $P$.
Since the number of the linear extensions is at most $k!$, it is a fixed-parameter algorithm for $P$.
Thus, we may assume that $S=\{s_1, s_2, \dots, s_k\}$ such that $s_1 \prec s_2 \prec \dots \prec s_k$. 
Let $\tilde{T}=\bigcup_{s\in S} \varphi (s)$ and let $\tilde{k}=|\tilde{T}|$.
Then, we have $\tilde{k}\leq 2k$.
We denote $[\tilde k]=\{1, 2, \dots , \tilde k\}$. 

Let $f$ be a token placement.
Let $\tilde{V}$ be the positions where tokens in $\tilde{T}$ are placed, i.e., $\tilde{V}=\{f^{-1}(t) \mid t\in \tilde{T}\}$.
We denote $\tilde{V}=\{v_1, v_2, \dots, v_{\tilde{k}}\}$ where $v_1< v_2 < \dots < v_{\tilde{k}}$.
Define a vector $x\in \mathbb{Z}^{\tilde{k}}$ so that $x_i = v_{i+1} - v_{i}$, where $v_0 = 0$, for every index $i=0, 1, \dots, \tilde{k}-1$.
We note that $\sum_{i=0}^{\tilde{k}-1} x_i = v_{\tilde{k}} \leq n$ holds.
We further define a bijection $\sigma \colon [\tilde k] \to \tilde T$ as $\sigma (i) = f(v_i)$.
Let $\Sigma$ denote the set of all bijections from $[\tilde k]$ to $\tilde T$. 
We call the pair $(x, \sigma)$ the \emph{signature} of the token placement $f$, which is denoted by $\mathsf{sig}(f)$.
The signature maintains the information only on the tokens in $\tilde{T}$, which suffices for finding a shortest feasible swap sequence since swapping two tokens not in $\tilde{T}$ is redundant in the swap sequence.

We first present a polynomial-time algorithm when $k$ is a fixed constant in Section~\ref{sec:algorithm_constk}, and then a fixed-parameter algorithm in Section~\ref{sec:algorithm_fpt}.

\subsection{Polynomial-time algorithm for a fixed constant \texorpdfstring{$k$}{k}}\label{sec:algorithm_constk}

Define $\mathcal{R} = \{x\in \mathbb{Z}^{\tilde{k}}\mid \sum x_j \leq n, x_j\geq 1~(j=0, 1,\dots, \tilde{k}-1)\}$ and $\mathcal{S} = \{(x, \sigma)\mid x\in \mathcal{R}, \sigma \in \Sigma\}$.
We see that, for any $(x, \sigma)\in \mathcal{S}$, there exists a token placement $f$ such that $\mathsf{sig}(f)=(x, \sigma)$, and such a placement $f$ can be found in polynomial time even when $k$ is not a constant.
It holds that $|\mathcal{S}|=O(\tilde{k}! n^{\tilde{k}})$.

For two signatures $(x^0,\sigma^0)$ and $(x^t,\sigma^t)$, a \emph{swap sequence from $(x^0,\sigma^0)$ to $(x^t,\sigma^t)$} means a swap sequence $f_0 \leadsto f_1 \leadsto \dotsm \leadsto f_\ell$ for some token placements $f_0$ and $f_\ell$ such that $\mathsf{sig}(f_0)=(x^0,\sigma^0)$ and  $\mathsf{sig}(f_\ell)=(x^t,\sigma^t)$.
Its length is defined to be $\ell$.

We first show that we can find a shortest swap sequence between two signatures in polynomial time when $k$ is a fixed constant.

\begin{lemma}\label{lem:shorestpathFor2placements}
For two signatures $(x^0,\sigma^0)$ and $(x^t,\sigma^t)$,
we can find a shortest swap sequence from $(x^0,\sigma^0)$ to $(x^t,\sigma^t)$ in $O(\mathrm{poly}(|\mathcal{S}|))$ time.
\end{lemma}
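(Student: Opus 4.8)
The plan is to turn the question into a shortest-path computation in an auxiliary graph $\mathcal{H}$ whose vertex set is $\mathcal{S}$. I would put an undirected edge between $(x,\sigma)$ and $(x',\sigma')$ exactly when some single swap on $G$ transforms some token placement with signature $(x,\sigma)$ into some token placement with signature $(x',\sigma')$, where $(x,\sigma)\neq(x',\sigma')$. The first task is to describe these edges purely in terms of signatures. A single swap exchanges the tokens on two adjacent vertices of the path $G$, and only two cases matter: (i) a \emph{shift}, in which a token of $\tilde T$ sitting at position $v_i$ is exchanged with an adjacent token \emph{not} in $\tilde T$ --- in signature terms this is available precisely when the gap on the corresponding side of $v_i$ has $x$-coordinate at least $2$, it moves one unit of that gap across $v_i$, and it leaves $\sigma$ unchanged; and (ii) a \emph{transposition}, in which two tokens of $\tilde T$ occupying adjacent vertices (so $v_{i+1}=v_i+1$ and the intervening $x$-coordinate equals $1$) are exchanged --- this leaves $x$ unchanged and transposes $\sigma(i)$ and $\sigma(i+1)$. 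Every other single swap exchanges two tokens outside $\tilde T$ and does not change the signature at all. Hence each vertex of $\mathcal{H}$ has only $O(\tilde k)$ incident edges, so $\mathcal{H}$ has $O(\tilde k\,|\mathcal{S}|)=O(\mathrm{poly}(|\mathcal{S}|))$ edges and can be built explicitly in time polynomial in $|\mathcal{S}|$.

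Next I would show that the length of a shortest swap sequence from $(x^0,\sigma^0)$ to $(x^t,\sigma^t)$ equals the $\mathcal{H}$-distance between these two vertices. For one direction, any swap sequence $f_0 \leadsto \dots \leadsto f_\ell$ with $\mathsf{sig}(f_0)=(x^0,\sigma^0)$ and $\mathsf{sig}(f_\ell)=(x^t,\sigma^t)$ yields, by reading off $\mathsf{sig}(f_0),\dots,\mathsf{sig}(f_\ell)$ and deleting consecutive repetitions (the signature-preserving swaps), a walk in $\mathcal{H}$ of length at most $\ell$ between the two vertices; hence $\ell$ is at least the $\mathcal{H}$-distance. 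For the other direction --- which simultaneously produces the desired sequence --- I would take a shortest path in $\mathcal{H}$ from $(x^0,\sigma^0)$ to $(x^t,\sigma^t)$, start from any concrete placement $f_0$ with $\mathsf{sig}(f_0)=(x^0,\sigma^0)$ (which exists and is computable in polynomial time, as already noted in the paper), and then follow the path one edge at a time: by induction the current placement has signature equal to the current vertex of the path, the next edge is of type~(i) or type~(ii) and therefore designates a concrete pair of adjacent vertices of $G$ whose swap in the current placement yields a placement with the next signature on the path. This builds a swap sequence of length exactly the path length, so the two quantities coincide.

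Finally I would assemble the running time: $\mathcal{H}$ has $|\mathcal{S}|$ vertices and $O(\mathrm{poly}(|\mathcal{S}|))$ edges, a breadth-first search from $(x^0,\sigma^0)$ runs in $O(\mathrm{poly}(|\mathcal{S}|))$ time (and, since the path $G$ is connected, the target is reachable), and reconstructing the placements along the path costs $O(n)$ per step with $n=O(|\mathcal{S}|)$. The step I expect to take the most care is the case analysis in the first paragraph: verifying that single swaps on $G$ correspond exactly to the type-(i) and type-(ii) transitions plus signature-preserving no-ops, checking the boundary indices $i=1$ and $i=\tilde k$ of a shift (where one side has no gap), and confirming that the resulting edge relation is symmetric so that $\mathcal{H}$ is a well-defined undirected graph. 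Everything after that is a standard shortest-path argument.
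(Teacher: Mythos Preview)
Your proposal is correct and follows the same approach as the paper: build an auxiliary graph on the vertex set $\mathcal{S}$ with edges corresponding to single swaps, then compute a shortest path. The paper's proof is terser---it simply asserts that paths in this graph correspond to swap sequences and appeals to the fact that the graph has $|\mathcal{S}|$ vertices---whereas you go further by explicitly classifying the edges into shifts and transpositions, bounding the degree by $O(\tilde k)$, and spelling out the lifting from a path back to a concrete swap sequence; this extra care is justified, since the paper's one-line correspondence claim does need the observation that any edge of $\mathcal{H}$ can be realized from \emph{any} placement with the given signature, which your case analysis supplies.
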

\begin{proof}
We construct a graph on the vertex set $\mathcal{S}$ such that $(x, \sigma)$ and $(x', \sigma')$ are adjacent if and only if there exist two token placements $f$, $f'$ such that $\mathsf{sig}(f)=(x, \sigma)$, $\mathsf{sig}(f')=(x', \sigma')$, and $f\leadsto f'$.
Then a path from $(x^0,\sigma^0)$ to $(x^t,\sigma^t)$
in the graph corresponds to a swap sequence from $(x^0,\sigma^0)$ to $(x^t,\sigma^t)$.
Therefore, we can find a shortest swap sequence by finding a shortest path from $(x^0,\sigma^0)$ to $(x^t,\sigma^t)$ in the graph.
Since the number of vertices of the constructed graph is $|\mathcal{S}|$, 
it can be done in $O(\mathrm{poly}(|\mathcal{S}|))$ time. 
\end{proof}

The above lemma allows us to design a polynomial-time algorithm for a fixed constant $k$.

\begin{theorem}
Let $P= (S, \preceq)$ be a chain of $k$ elements, where $k$ is a fixed constant.
For a token placement $f_0$,
we can find a shortest feasible swap sequence from $f_0$ in polynomial time.
\end{theorem}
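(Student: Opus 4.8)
The plan is to reduce the problem to a shortest-path computation in a finite graph whose vertices are signatures, using the machinery already set up. Since $P$ is a chain, by the discussion preceding the theorem the instance is equivalent to finding a shortest swap sequence that realizes the sequence $Q = (q_1, q_2, \dots, q_k)$ of token pairs, where $q_j = \varphi(s_j)$. The key observation I would exploit is that a swap sequence realizes $Q$ if and only if it passes through token placements $g_1, g_2, \dots, g_k$ in this order (possibly with repetition, i.e.\ some $g_j$ may coincide) such that $g_j^{-1}(q_j) \in E$; and whether $g_j^{-1}(q_j) \in E$ holds depends only on $\mathsf{sig}(g_j)$, because $q_j \subseteq \tilde T$ and the signature records the relative positions of all tokens in $\tilde T$. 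So the realizability requirement is a constraint purely on the sequence of signatures visited.

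First I would make precise the notion of a signature ``realizing'' $q_j$: say $(x,\sigma) \in \mathcal{S}$ realizes $q_j = \{t, t'\}$ if the two indices $i, i'$ with $\sigma(i) = t$, $\sigma(i') = t'$ satisfy $|i - i'| = 1$ and the corresponding coordinate of $x$ (the gap between positions $i$ and $i'$) equals $1$. This is checkable in $O(\tilde k)$ time. Next I would invoke Lemma~\ref{lem:shorestpathFor2placements} (or rather the graph it constructs): build the graph on vertex set $\mathcal{S}$ where $(x,\sigma)$ and $(x',\sigma')$ are adjacent iff there are token placements $f, f'$ with those signatures and $f \leadsto f'$. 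Two consecutive signatures in this graph are at distance exactly the number of swaps needed to go from any placement with the first signature to some placement with the second; crucially, edges of this graph correspond to single swaps on the ``essential'' tokens, and moving tokens outside $\tilde T$ is never beneficial, so the distance in this graph between two signatures equals the true minimum swap length between placements realizing them.

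Then the algorithm is: for the fixed chain $Q$, do a layered/product search. Form $k+1$ copies $\mathcal{S}^{(0)}, \dots, \mathcal{S}^{(k)}$ of the signature graph; within layer $j$ keep all edges of the signature graph; between layer $j-1$ and layer $j$ put a zero-length edge from $(x,\sigma) \in \mathcal{S}^{(j-1)}$ to the same $(x,\sigma) \in \mathcal{S}^{(j)}$ only if $(x,\sigma)$ realizes $q_j$. Start from the single vertex $\mathsf{sig}(f_0)$ in layer $0$ and compute shortest paths to all vertices of layer $k$; the answer is the minimum such distance, and an actual shortest feasible swap sequence is recovered by expanding each signature-graph edge into its underlying single swap and prepending, at the transition into layer $j$, nothing (the constraint is satisfied in place). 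Since $k$ is a fixed constant, $\tilde k \le 2k$ is a fixed constant, so $|\mathcal{S}| = O(\tilde k!\, n^{\tilde k}) = O(n^{O(1)})$, the layered graph has $O(k \cdot |\mathcal{S}|)$ vertices, and Dijkstra/BFS runs in polynomial time.

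The main obstacle — and the step that needs the most care — is justifying that restricting attention to signatures loses nothing: that an optimal feasible swap sequence can be assumed never to swap two tokens both outside $\tilde T$, and more delicately that the distance between two signatures in the constructed graph really equals the minimum number of swaps between placements having those signatures (so that concatenating the per-segment optima is globally optimal, and that realizing $q_j$ ``in place'' at an intermediate placement is without loss of generality rather than requiring extra moves). I would handle this by an exchange argument: given any feasible swap sequence, mark the placements $f_{i(s_1)}, \dots, f_{i(s_k)}$ witnessing realizability, delete every swap that exchanges two non-$\tilde T$ tokens (this only removes swaps and preserves every $f_{i(s_j)}^{-1}(q_j) \in E$ since those depend only on $\tilde T$-tokens' positions), and observe the resulting sequence, read through signatures, is exactly a walk in the layered graph of the same or shorter length; conversely any layered-graph walk lifts back to an actual swap sequence of equal length by choosing representative placements consistently. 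This equivalence, combined with Lemma~\ref{lem:shorestpathFor2placements}, gives correctness.
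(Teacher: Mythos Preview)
Your proposal is correct and follows essentially the same approach as the paper: a layered shortest-path computation over the signature space $\mathcal{S}$, with transitions between layers allowed precisely at signatures realizing the next pair $q_j$. The paper's version collapses each layer to only the realizing signatures $\mathcal{S}_j$ and uses precomputed signature-graph distances (Lemma~\ref{lem:shorestpathFor2placements}) as edge weights between consecutive layers, whereas you keep the full signature graph in every layer with unit-weight intra-layer edges and zero-weight layer transitions---an equivalent formulation; you also spell out the exchange argument (deleting swaps of two non-$\tilde T$ tokens) that the paper only asserts in a sentence before the construction.
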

\begin{proof}
Let $\mathcal{S}_i = \{(x, \sigma)\in \mathcal{S}\mid \exists f\text{\ s.t.~} \mathsf{sig}(f)=(x, \sigma) \text{\ and }f^{-1}(\varphi(s_i))\in E\}$, which is the set of signatures $(x, \sigma)$ that correspond to token placements in which $\varphi(s_i)$ are adjacent.
Note that 
if $(x, \sigma)\in \mathcal{S}_i$, then 
$\mathsf{sig}(f)=(x, \sigma)$ implies $f^{-1}(\varphi(s_i))\in E$ for any $f$, 
because $\varphi(s_i) \subseteq \tilde T$. 
Moreover, let $\mathcal{S}_0 = \{\mathsf{sig}(f_0)\}$.

Define a digraph $\mathcal{G}=(\bigcup_{i=0}^{k} \mathcal{S}_i, \bigcup_{i=0}^{k-1} E_i)$, where $E_i = \{((x,\sigma), (x', \sigma'))\mid (x, \sigma)\in \mathcal{S}_i, (x', \sigma')\in \mathcal{S}_{i+1}\}$ for $i=0,1,\dots, k-1$.
We suppose that $e=((x,\sigma), (x', \sigma'))$ in $E_i$ has a length equal to the shortest length of a swap sequence from $(x,\sigma)$ to $(x',\sigma')$, which can be computed in polynomial time by Lemma~\ref{lem:shorestpathFor2placements}.

We see that the shortest path from the vertex in $\mathcal{S}_0$ to some vertex in $\mathcal{S}_k$ corresponds to a shortest feasible swap sequence from $f_0$.
The number of vertices of the graph is bounded by $O(k |\mathcal{S}|)$, which is polynomial when $k$ is a constant.
Thus, the theorem holds.
\end{proof}

\subsection{Fixed-parameter algorithm}\label{sec:algorithm_fpt}

In this section, we present a fixed-parameter algorithm parameterized by $k$ by dynamic programming.

We first observe that we can compute the shortest length of a swap sequence between two signatures with the same bijection $\sigma$.

\begin{lemma}\label{lem:pathSamePermutation}
Suppose that we are given two signatures $(x,\sigma)$ and $(y,\sigma)$ with the same bijection $\sigma$.
Then, the shortest length of a swap sequence from $(x,\sigma)$ to $(y,\sigma)$ is equal to
$\sum_{i=1}^{\tilde{k}} |v_i - w_i|$,
where $v_i = \sum_{j=0}^{i-1} x_j$ and $w_i = \sum_{j=0}^{i-1} y_j$ for $i=1,\dots, \tilde{k}$.
Moreover, there exists a shortest swap sequence such that all the token placements in the sequence have the same bijection $\sigma$ in their signatures.
\end{lemma}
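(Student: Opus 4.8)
The plan is to treat the $\tilde{k}$ tokens in $\tilde{T}$ as labeled objects moving on the path $G$, while the remaining $n-\tilde{k}$ tokens are "blanks" that only serve as padding. Since $\sigma$ is fixed, the $i$-th token of $\tilde{T}$ (in left-to-right order) must start at position $v_i=\sum_{j<i}x_j$ and end at position $w_i=\sum_{j<i}y_j$, and because the relative left-to-right order of these tokens is preserved by $\sigma$ in both signatures, the token currently sitting at the $i$-th slot must be routed to the $i$-th slot of the target. First I would establish the lower bound: any swap sequence realizing the transition moves the $i$-th $\tilde{T}$-token by at least $|v_i-w_i|$ steps along the path (each swap changes one token's position by exactly one), so the total number of swaps is at least $\sum_{i=1}^{\tilde{k}}|v_i-w_i|$; here I use that a swap involving two $\tilde{T}$-tokens would only move them toward each other's targets at the expense of "crossing," which is never needed since the order is the same, and a swap involving a $\tilde{T}$-token and a blank advances exactly one $\tilde{T}$-token by one.

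For the matching upper bound, I would give an explicit construction that never swaps two $\tilde{T}$-tokens with each other, hence automatically preserves $\sigma$ throughout. The natural strategy: process the $\tilde{T}$-tokens in an order that avoids conflicts — for instance, repeatedly pick a token that needs to move right and is the rightmost such, or a token that needs to move left and is the leftmost such, and slide it one step (swapping it with the adjacent blank or, after re-indexing, effectively shifting blanks). A cleaner way to see feasibility is to argue by a potential/exchange argument: as long as the current placement differs from the target, there is some $\tilde{T}$-token not at its target whose immediate neighbor in the required direction is a blank, so we can decrease $\sum_i |(\text{current pos of }i)-w_i|$ by exactly one with a single swap. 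This requires checking that such a token always exists, which follows because if every $\tilde{T}$-token that still needs to move is blocked by another $\tilde{T}$-token in its direction of travel, one derives a contradiction from the order-preservation (the blocking token would also need to move in the same direction, and the extreme one among them cannot be blocked). Iterating drives the potential to zero in exactly $\sum_{i}|v_i-w_i|$ swaps, and since this sequence uses only $\tilde{T}$-token/blank swaps, all intermediate signatures have bijection $\sigma$, proving the "moreover" clause.

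The main obstacle I anticipate is the upper-bound construction: one must verify carefully that the greedy "always an unblocked token exists" claim holds and that the total count is exactly $\sum_i|v_i-w_i|$ rather than more — in particular that moving blanks around to reposition them does not incur extra cost. I would handle the bookkeeping by phrasing the argument purely in terms of positions of the $\tilde{T}$-tokens (ignoring which blank is where), noting that a swap of a $\tilde{T}$-token with an adjacent blank is always available in whichever direction that token must go, provided no other $\tilde{T}$-token occupies that adjacent cell; and the order-preservation of $\sigma$ is exactly what rules out a deadlock among the tokens that still need to move. Everything else — the lower bound and the consistency of the formula with $\sum x_j,\sum y_j\le n$ — is routine.
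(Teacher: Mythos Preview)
Your proposal is correct and follows the same core idea as the paper: both argue that the potential $\sum_i |v_i-w_i|$ is a lower bound and then exhibit a swap sequence that decreases it by exactly one per step while never swapping two $\tilde T$-tokens (hence preserving $\sigma$ throughout). The only difference is in how the upper bound is organised. The paper does a clean induction: take the smallest index $p$ with $v_p\neq w_p$, assume $v_p>w_p$ (otherwise swap the roles of $x$ and $y$), and slide $\sigma(p)$ left by $v_p-w_p$ swaps; this is safe because $v_{p-1}=w_{p-1}<w_p$, so no $\tilde T$-token blocks the move. Your greedy ``some unblocked token always exists'' is a more general formulation of the same thing; note, however, that your stated reason for no deadlock (``the blocking token would also need to move in the same direction'') is not quite right---a blocking token might be at its target---and the correct fix is exactly the paper's choice: among tokens that need to move right (resp.\ left), the rightmost (resp.\ leftmost) one is never blocked, since the next $\tilde T$-token is already at or beyond its own target, hence at distance $\ge 2$. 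On the lower bound, your remark that a swap of two $\tilde T$-tokens ``only crosses'' is the right intuition; one clean way to make it rigorous is to track the sorted multiset of $\tilde T$-positions rather than the individual tokens, since a $\tilde T$--$\tilde T$ swap leaves that multiset unchanged and a $\tilde T$--blank swap changes exactly one sorted position by one. The paper's own lower-bound sentence is no more detailed than yours.
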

\begin{proof}
Since the initial and target token placements have the same $\sigma$ in their signature, we need $|w_i - v_i|$ swaps to move the token $\sigma (i)$ to $w_i$ for any $i=1,\dots, \tilde{k}$. Thus the shortest swap sequence has length at least $\sum_{i=1}^{\tilde{k}} |v_i - w_i|$.
To see that they are equal, we show the existence of a swap sequence of length $\sum_{i=1}^{\tilde{k}} |v_i - w_i|$ by induction on this value.
If $\sum_{i=1}^{\tilde{k}} |v_i - w_i| = 0$, then the claim is obvious. 
Otherwise, let $p$ be the minimum index such that $v_{p} \neq w_{p}$.
By changing the roles of $x$ and $y$ if necessary, we may assume that $v_{p} > w_{p}$. 
Then, starting from $(x,\sigma)$, we apply swap operations $v_p - w_p$ times to obtain a new signature $(x',\sigma)$ 
such that $x'_{p-1} = x_{p-1} - (v_p - w_p)$ and $x'_{p} = x_{p} + (v_p - w_p)$. 
That is, $v'_i = v_i$ for $i \in [\tilde{k}] \setminus \{p\}$ and $v'_p = w_p$, where $v'_i$ is defined as $v'_i = \sum_{j=0}^{i-1} x'_j$. 
Note that this operation is possible without changing the bijection $\sigma$, because $v'_p = w_p > w_{p-1} = v_{p-1}$ by the minimality of $p$. 
By the induction hypothesis, there exists a swap sequence of length $\sum_{i=1}^{\tilde{k}} |v'_i - w_i|$
between $(x',\sigma)$ and $(y,\sigma)$.  
Therefore, we obtain a swap sequence between $(x,\sigma)$ and $(y,\sigma)$
whose length is $|v_p - w_p| + \sum_{i=1}^{\tilde{k}} |v'_i - w_i| = \sum_{i=1}^{\tilde{k}} |v_i - w_i|$. 
Moreover, each token placement in the obtained swap sequence has the same bijection $\sigma$.
\end{proof}

By the lemma, the shortest length of a swap sequence from $(x, \sigma)$ to $(y, \sigma)$ does not depend on $\sigma$.
Thus, we denote it by $d(x, y)$ for $x, y\in\mathcal{R}$.

Let $\mathbf{f}$ be a feasible swap sequence $f_0 \leadsto f_1 \leadsto \dotsm \leadsto f_\ell$.
Suppose that $\varphi(s_j)$ is realized at token placement $f_{i_j}$, that is, $f_{i_j}^{-1}(\varphi(s_j)) \in E$ for $j=1,2,\dots, k$ and $i_1\leq i_2\leq \dots \leq i_k$.
We define $i_0=0$.
Note that the number of distinct values in $\{i_1, \dots, i_k\}$, denoted by $\alpha$, is at most $k$.
Also, let $\beta$ be the number of times that $\sigma$ in the signature changes in the swap sequence.
We call $\alpha + \beta$ the \textit{signature length} of $\mathbf{f}$.

The following lemma says that the signature length is bounded by a function of $k$, which we denote by $\ell_{\max}$.

\begin{lemma}
For a shortest feasible swap sequence $f_0 \leadsto f_1 \leadsto \dotsm \leadsto f_\ell$,
the signature length is bounded by $((2k)!+1)k$ from above.
\end{lemma}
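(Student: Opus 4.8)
The plan is to bound the signature length $\alpha + \beta$ of a shortest feasible swap sequence by a counting argument over the distinct signatures that appear. First I would fix a shortest feasible swap sequence $\mathbf{f} = f_0 \leadsto f_1 \leadsto \dotsm \leadsto f_\ell$ together with the indices $0 = i_0 \le i_1 \le \dotsm \le i_k$ at which the pairs $\varphi(s_1), \dots, \varphi(s_k)$ are realized. Recall that $\alpha$ is the number of distinct values in $\{i_1, \dots, i_k\}$, so $\alpha \le k$ always; the real content is the bound $\beta \le ((2k)!+1)k - k = ((2k)!+1-1)k = (2k)! \cdot k$, i.e.\ that the bijection component $\sigma$ of the signature cannot change too many times.

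The key observation is Lemma~\ref{lem:pathSamePermutation}: between two consecutive ``milestone'' placements $f_{i_{j-1}}$ and $f_{i_j}$, if we never needed to change $\sigma$, a shortest subroute would keep $\sigma$ fixed. So consider the $\sigma$-values $\sigma^{(0)}, \sigma^{(1)}, \dots$ that occur along $\mathbf{f}$, listed in the order of their first appearance and with consecutive repetitions collapsed; $\beta$ is the number of $\sigma$-changes, i.e.\ the length of this list minus one. I claim that within each of the $\alpha$ blocks $[i_{j-1}, i_j]$, the sequence $\sigma$ never repeats a value it has already left and come back to — more precisely, that no $\sigma$-value occurs in two non-adjacent maximal runs within the whole sequence. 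Indeed, suppose some bijection $\tau \in \Sigma$ occurs at positions $a < b$ with $\sigma$ different at some position strictly between them. The subsequence $f_a \leadsto \dotsm \leadsto f_b$ is itself a swap subsequence from signature $(x^a, \tau)$ to $(x^b, \tau)$; by Lemma~\ref{lem:pathSamePermutation} there is a swap sequence of length $d(x^a, x^b) \le b - a$ realizing the same endpoints and keeping $\sigma = \tau$ throughout. Crucially, in that replacement every intermediate placement still has $\sigma = \tau$, so every pair $\varphi(s_m)$ whose milestone $i_m$ fell in $[a,b]$ is still realized (its realization only depends on the adjacency of the two relevant tokens, which under a fixed $\sigma$ is determined by $x$, and we may keep track of such an intermediate $x$). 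Splicing this replacement into $\mathbf{f}$ yields a feasible swap sequence no longer than $\mathbf{f}$ but with strictly fewer $\sigma$-changes, contradicting minimality (or, after also minimizing the number of $\sigma$-changes among shortest sequences, still a contradiction). Hence each element of $\Sigma$ labels at most one maximal run of the $\sigma$-sequence.

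Since $|\Sigma| = \tilde{k}! \le (2k)!$, the number of maximal $\sigma$-runs is at most $(2k)!$, so $\beta \le (2k)! - 1$. Wait — that alone would already give $\alpha + \beta \le k + (2k)! - 1$, which is stronger than claimed; the stated bound $((2k)!+1)k$ is more generous, presumably because the intended argument is slightly weaker: it bounds the number of $\sigma$-changes within each of the $\le k$ milestone blocks separately by $(2k)!$ (each block can visit each bijection at most once, but a bijection may recur across different blocks), giving $\beta \le k \cdot (2k)!$ and $\alpha + \beta \le k + k\cdot(2k)! = k((2k)!+1)$. I would follow that blockwise version, since it only needs the within-a-block no-repetition property, which is the safe statement: inside a single block $[i_{j-1}, i_j]$ the spliced replacement provably preserves feasibility of exactly the pairs realized in that block, and the milestones of other blocks are untouched. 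The main obstacle is making the splicing argument airtight: I must verify that replacing a sub-route by the $\sigma$-constant one of Lemma~\ref{lem:pathSamePermutation} does not destroy the realization of any $\varphi(s_m)$ whose milestone index lies inside the replaced segment. This works because within the replaced segment $\sigma$ is constant and the positions of all tokens of $\tilde T$ are governed by $x$, so one can choose, for each such $s_m$, an intermediate $x$ along the $\sigma$-constant route at which $\varphi(s_m)$ is adjacent — and Lemma~\ref{lem:pathSamePermutation}'s route does sweep every token monotonically from its start to its end position, so every intermediate gap pattern between start and end is attained in a suitable order; handling several pairs $s_m$ simultaneously requires only that their milestones already occurred in a consistent order in $\mathbf{f}$, which they did.
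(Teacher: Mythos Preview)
Your proposal is correct and coincides with the paper's argument: decompose the sequence into the $k$ blocks $[i_{j-1}, i_j]$, and within each block use Lemma~\ref{lem:pathSamePermutation} to short-cut any return to a previously visited bijection, so that each of the at most $\tilde{k}! \le (2k)!$ bijections occurs in at most one run per block; this gives $\beta \le k\,(2k)!$ and hence $\alpha+\beta \le k((2k)!+1)$. One simplification: your worry in the last paragraph about milestones lying inside the replaced segment is unnecessary in the blockwise version, since by construction no $i_m$ lies in the open interval $(i_{j-1}, i_j)$ and the endpoint signatures are preserved by the splice, so feasibility is automatic and the hand-wavy ``monotone sweep'' argument is not needed.
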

\begin{proof}
Consider the partial swap sequence $f_{i_{j-1}} \leadsto \dotsm \leadsto f_{i_j}$ for $j=1,2,\dots, k$.
We observe that, in the partial swap sequence, the number of times that bijections change is at most $\tilde{k}!$.
Indeed, in this partial sequence, token placements with the same bijection appear sequentially, as otherwise, we can short-cut between them by Lemma~\ref{lem:pathSamePermutation}.
Therefore, the total signature length is bounded by $k(\tilde{k}!)+k \leq k((2k)!)+k$.
\end{proof}

Let $g^{\ell}(x, \sigma, P)$ be the shortest length of a swap sequence to realize $P$ from some token placement $f_0$ with $\mathsf{sig}(f_0)=(x, \sigma)$ such that it has signature length at most $\ell$.
In what follows, we derive a recursive equation on $g^{\ell}(x, \sigma, P)$ for dynamic programming.

We first give notation.
For a bijection $\sigma \in \Sigma$ and a non-negative integer $j$ with $1\leq j\leq \tilde{k}-1$, let $\sigma_j \in \Sigma$ be the bijection obtained from $\sigma$ by swapping the $j$-th and $(j+1)$-st tokens.
We define $\mathcal{R}_j=\{x\in \mathcal{R}\mid x_j = 1\}$, which is the set of signatures such that the $j$-th token and the $(j+1)$-st token are adjacent.

To derive a recursive equation on $g^{\ell}(x, \sigma, P)$, consider the following two cases in which the signature length is decreased at least by one, separately.

The first case is when the bijection $\sigma$ is changed, i.e., $\beta$ decreases.
Suppose that we change $\sigma$ to $\sigma_j$ for some $1\leq j\leq \tilde{k}-1$.
In this case, we first move to $(x', \sigma)$ for some $x'\in\mathcal{R}_j$, and then change $(x', \sigma)$ to $(x', \sigma_j)$. 
By Lemma~\ref{lem:pathSamePermutation}, the number of swaps is $d(x, x')+1$.
After moving to $(x', \sigma_j)$, we can recursively consider finding a shortest swap sequence to realize $P$ from $(x', \sigma_j)$ with the signature length at most $\ell-1$.
Therefore, the total length in this case is $g^{\ell-1} (x', \sigma_j, P) + d (x, x')+1$, and hence the shortest length when we change $\sigma$ is equal to 
\[
\min_{1\leq j\leq \tilde{k}-1} \min_{x'\in \mathcal{R}_j} \left\{ g^{\ell-1} (x', \sigma_j, P) + d (x, x')+1\right\}.
\]

The other case is when $s_1$ is realized without changing $\sigma$, i.e., $\alpha$ decreases.
Then, it is necessary that $s_1=(\sigma(h), \sigma(h+1))$ for some $1\leq h\leq \tilde{k}-1$.
To realize $s_1$, we move $(x, \sigma)$ to $(x', \sigma)$ for some $x'\in \mathcal{R}_h$.
By recursion, the total length in this case is $g^{\ell-1} (x', \sigma, P') + d (x, x')$ by Lemma~\ref{lem:pathSamePermutation}, where $P'$ is the poset obtained from $P$ by removing the first element $s_1$, that is, $P'$ forms the chain $s_2\prec s_3 \prec \dots \prec s_k$.
Thus, the shortest length in this case is 
\[
\min_{x'\in \mathcal{R}_h} \left\{ g^{\ell-1} (x', \sigma, P') + d (x, x')\right\}.
\]

In summary, we have that, for any $x\in \mathcal{R}$, $\sigma \in \Sigma$, and $1\leq \ell \leq \ell_{\max}$,  
\begin{align}\label{eq:rec1}
g^{\ell} (x, \sigma, P) =
\min \biggl\{&
\min_{1\leq j\leq \tilde{k}-1} \min_{x'\in \mathcal{R}_j} \left\{ g^{\ell-1} (x', \sigma_j, P) + d (x, x')+1\right\},\nonumber \\
&\min_{x'\in \mathcal{R}_h} \left\{ g^{\ell-1} (x', \sigma, P') + d (x, x')\right\}
\biggr\},
\end{align}
where $s_1 = (\sigma(h), \sigma(h+1))$ for some $h$. If such $h$ does not exist, the second term is defined to be $+\infty$.

It follows from \eqref{eq:rec1} that we can design a dynamic programming algorithm.
However, the running time would become $O(k\cdot k! \ell_{\max} |\mathcal{R}|)$, and this does not give a fixed-parameter algorithm since 
$|\mathcal{R}|=O(n^{\tilde{k}})$.
In what follows, we will reduce the running time by showing that the minimum is achieved at an extreme point.

For $i = 0, 1, \dots \tilde k -1$, let $e_i$ denote the unit vector whose $i$-th entry is one and the other entries are zeros.
For a vector $x\in \mathcal{R}$ and $1\leq j \leq \tilde{k}-1$, define $N_j (x)$ as
\[
N_j(x) = \{ x + a e_{j-1} - (x_j-1) e_j + b e_{j+1} \mid a + b = x_j-1,\ a, b \in \mathbb{Z}_+\},  
\]
where we regard $e_{\tilde k}$ as the zero vector to simplify the notation. 
Then, $x' \in N_j (x)$ satisfies that
\begin{align*}
x'_j &= 1, \\
x'_{j-1}+x'_{j+1}&=x_{j-1}+x_j+x_{j+1}-1,\\
x'_i & = x_i \ \text{for} \ \ i\not\in\{j-1, j, j+1\},
\end{align*}
where $x_{\tilde{k}}= n-\sum_{i=0}^{\tilde k -1} x_i$ and $x'_{\tilde{k}}= n-\sum_{i=0}^{\tilde k -1} x'_i$.
The signature $(x', \sigma)$ with $x'\in N_j (x)$ means that it is obtained from $(x, \sigma)$ by only moving two tokens $\sigma (j)$ and $\sigma (j+1)$ so that the two tokens are adjacent.
Moreover, define $y^j$ and ${y'}^j$ to be vectors in $N_j(x)$ in which $(a, b) = (0, x_j-1)$ and $(a, b) = (x_j-1, 0)$, respectively. 
Then, 
\begin{align*}
(y^j)_{j-1} = x_{j-1}, &\quad (y^j)_{j+1} = x_j + x_{j+1}-1,\\
({y'}^j)_{j-1} = x_{j-1}+x_j-1, &\quad ({y'}^j)_{j+1} = x_{j+1}.
\end{align*}
Thus, $(y^j, \sigma)$~($({y'}^j, \sigma)$, resp.,) is obtained from $(x, \sigma)$ by only moving one token $\sigma (j+1)$~($\sigma (j)$, resp.,) so that $\sigma (j)$ and $\sigma (j+1)$ are adjacent.

The following theorem asserts that the minimum of~\eqref{eq:rec1} is achieved at either $y^j$ or ${y'}^j$.
\begin{theorem}\label{thm:fptrec}
For any $x\in \mathcal{R}$, $\sigma \in \Sigma$, and $1\leq \ell \leq \ell_{\max}$, it holds that
\begin{align*}
g^{\ell} (x, \sigma, P) =
\min \biggl\{&
\min_{1\leq j\leq \tilde{k}-1, y\in \{y^j, {y'}^j\}} \left\{ g^{\ell-1} (y, \sigma_j, P) + d (x, y)+1\right\},\\
&\min_{y\in \{y^h, {y'}^h\}}\left\{ g^{\ell-1} (y, \sigma, P')+ d (x, y)\right\}
\biggr\}.
\end{align*}
\end{theorem}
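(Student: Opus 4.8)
The plan is to show that in the recursion \eqref{eq:rec1}, for each fixed $j$ the inner minimization over $x' \in \mathcal{R}_j$ is attained at one of the two extreme vectors $y^j$ or ${y'}^j$. Observe that, for a fixed $j$, the set $\mathcal{R}_j$ constrains $x'$ only through the coordinates $x'_{j-1}, x'_j, x'_{j+1}$: we must have $x'_j = 1$, all coordinates at least $1$ (with $x'_{\tilde k} = n - \sum_i x'_i \ge 1$), and $x'_i = x_i$ is \emph{not} forced for $i \notin \{j-1,j,j+1\}$ in general. So the first key step is to argue that it is never worse to keep $x'_i = x_i$ for $i \notin \{j-1,j,j+1\}$. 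This follows because $d(x,x')$ is an $\ell_1$-type distance separable across coordinates (by Lemma~\ref{lem:pathSamePermutation}, $d(x,x') = \sum_i |v_i - w_i|$ where the partial sums couple coordinates, so this needs a short monotonicity argument), and the function $g^{\ell-1}(\cdot,\sigma_j,P)$ — one would want to say it too is ``coordinatewise well-behaved'' — hmm, this is exactly where care is needed.

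The cleaner route, which I would actually carry out: fix $j$ and consider any minimizer $x' \in \mathcal{R}_j$ of $g^{\ell-1}(x',\sigma_j,P) + d(x,x')$. Write $x'$ in the form $x + a e_{j-1} + c\,(\text{stuff}) + b e_{j+1} + (\text{changes outside } j{-}1,j,j{+}1)$. First I would show that any minimizer can be taken to lie in $N_j(x)$: intuitively, moving mass around only among coordinates $j-1,j,j+1$ to make $x'_j=1$ is the cheapest way to realize $s_1$ or the swap, and any further perturbation of the other coordinates only moves them away from $x$, strictly increasing $d(x,x')$ while we can re-route the subsequent sequence through the un-perturbed signature (using that $g^{\ell-1}$ of the perturbed signature plus the distance back is, by the triangle inequality for $d$, at least $g^{\ell-1}$ of the un-perturbed one). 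So the minimum over $\mathcal{R}_j$ equals the minimum over $N_j(x)$. Then, within $N_j(x)$, parametrize by $(a,b)$ with $a+b = x_j - 1$; the claim becomes that the function $(a,b) \mapsto g^{\ell-1}(y_{a,b},\sigma_j,P) + d(x,y_{a,b})$ is minimized at an endpoint $(0,x_j-1)$ or $(x_j-1,0)$. Note $d(x,y_{a,b})$ along this segment is actually \emph{constant} equal to $x_j - 1$ — no wait, that is only true if the token $\sigma(j)$ moves right and $\sigma(j+1)$ moves left, i.e. it is $a + b = x_j-1$ regardless of split; yes, $d(x,y_{a,b}) = x_j - 1$ for every point of $N_j(x)$. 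So the task reduces to minimizing $g^{\ell-1}(y_{a,b},\sigma_j,P)$ over the segment, and I would show this is achieved at an endpoint.

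For that last reduction — the genuine heart of the argument — I would prove that $g^{\ell-1}(\cdot,\sigma',P)$ is, along the one-dimensional family $N_j(x)$ (where $x'_{j-1} + x'_{j+1}$ is fixed and everything else is fixed), a function whose minimum over the discrete segment is attained at an endpoint; the natural way is to establish that it behaves like a convex (or at least ``endpoint-minimized'') function of the single free parameter $a$. This should follow by induction on $\ell$ from the recursion itself, since each term on the right-hand side of \eqref{eq:rec1} is of the form (inductively endpoint-minimized function) plus $d(x,\cdot)$, and $d(x,\cdot)$ restricted to such a segment is affine; one must check convexity/endpoint-minimality is preserved under the outer $\min$ over finitely many $j$ and under the map that shifts the base point. \textbf{I expect this inductive preservation of the ``minimum-at-an-extreme-point'' property to be the main obstacle}, because $d(x',\cdot)$ for a \emph{varying} $x'$ inside the recursion does not restrict to an affine function of a single parameter in an obvious way, and one has to set up the right invariant — probably ``$g^{\ell}(\cdot,\sigma,P)$ is $L^\natural$-convex'' or a suitably weakened separable-convexity statement — so that it both implies the endpoint claim and survives the recursion. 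Once that structural lemma is in place, substituting $x' \in \{y^j,{y'}^j\}$ into both branches of \eqref{eq:rec1} yields exactly the displayed formula, completing the proof of Theorem~\ref{thm:fptrec}.
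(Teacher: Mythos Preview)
Your two-step plan --- first reduce the minimization over $\mathcal{R}_j$ to $N_j(x)$, then show the minimum over the one-parameter family $N_j(x)$ is attained at an endpoint --- is exactly the paper's strategy. The first step is the paper's Lemma~\ref{lem:rec2}, and your triangle-inequality sketch is on the right track: for any minimizer $x^* \in \mathcal{R}_j$ one exhibits $y \in N_j(x)$ lying on a $d$-geodesic from $x$ to $x^*$, i.e.\ with $d(x,y)+d(y,x^*)=d(x,x^*)$; combined with $g^{\ell-1}(y,\sigma,P) \le g^{\ell-1}(x^*,\sigma,P) + d(y,x^*)$ this gives $g^{\ell-1}(y,\sigma,P)+d(x,y) \le g^{\ell-1}(x^*,\sigma,P)+d(x,x^*)$. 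The paper finds such a $y$ by a short case split on where the partial sum $v^*_j$ falls relative to the interval $[v_j,v_{j+1}]$.

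The genuine gap is precisely where you flag it, and the invariant you are searching for is not convexity or $L^\natural$-convexity but the opposite: the paper proves (Lemma~\ref{lem:fptLinearExpression}), by induction on $\ell$, that $g^\ell(\cdot,\sigma,P)$ is a pointwise \emph{minimum} of finitely many affine functions of $x$ --- in other words, piecewise-linear concave. This is exactly the property that survives the recursion, because a $\min$ over $j$ and over $y\in N_j(x)$ of such functions (after the affine substitution $y = x + a e_{j-1} - (x_j{-}1)e_j + b e_{j+1}$ and the addition of the linear term $x_j$) is again a min of affine functions of $x$. Once this lemma is in hand the endpoint claim is immediate: along $N_j(x)$ the single free parameter is $a \in \{0,\dots,x_j{-}1\}$, each affine piece of $g^{\ell-1}$ is linear in $a$, so its minimum --- and hence the minimum of the pointwise $\min$ --- is attained at $a \in \{0,x_j{-}1\}$, i.e.\ at $y^j$ or ${y'}^j$. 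Your tentative invariants would not have closed the induction: convexity is destroyed by the outer $\min$, and $L^\natural$-convexity is likewise not closed under pointwise minimum.
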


To prove the theorem, we show the following two lemmas.
We first show that the minimum of~\eqref{eq:rec1} is achieved at some point in $N_j(x)$.

\begin{lemma}\label{lem:rec2}
It holds that, for any $x\in \mathcal{R}$, $\sigma \in \Sigma$, $1\leq \ell \leq \ell_{\max}$ and $1\leq j\leq \tilde{k}-1$,
\begin{align*}
\min_{x'\in \mathcal{R}_j} \left\{ g^{\ell-1} (x', \sigma, P) + d (x, x')\right\}
=
\min_{y\in N_j(x)} \left\{ g^{\ell-1} (y, \sigma, P) + d(x, y) \right\}.
\end{align*}
\end{lemma}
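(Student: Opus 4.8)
The inequality "$\geq$" is the easy direction: every $y \in N_j(x)$ satisfies $y \in \mathcal{R}_j$ (since $y_j = 1$ and all entries stay positive — here one must check that $a, b \geq 0$ keeps $x'_{j-1}, x'_{j+1} \geq 1$, which holds because $x_{j-1}, x_{j+1} \geq 1$ already and we only add nonnegative amounts), so the right-hand side ranges over a subset of the feasible points on the left, hence is at least the left-hand side. The real content is the reverse inequality "$\leq$": given an arbitrary minimizer $x' \in \mathcal{R}_j$ of the left-hand side, I want to produce some $y \in N_j(x)$ with $g^{\ell-1}(y,\sigma,P) + d(x,y) \leq g^{\ell-1}(x',\sigma,P) + d(x,x')$.

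The approach is a rerouting/exchange argument at the level of the prefix-sum coordinates $v_i = \sum_{j=0}^{i-1} x_j$. Recall from Lemma~\ref{lem:pathSamePermutation} that $d(x,x') = \sum_{i=1}^{\tilde k} |v_i - v'_i|$ where $v_i, v'_i$ are the prefix sums of $x, x'$. The key observation is that the constraint "$x' \in \mathcal{R}_j$" only pins down $v'_{j+1} - v'_j = 1$; it says nothing about where the block of tokens $\sigma(j), \sigma(j+1)$ sits as a whole, nor about the positions $v'_i$ for $i \notin \{j, j+1\}$. So from $x'$ I construct a candidate $y \in N_j(x)$ by keeping $v_i$ unchanged for all $i \notin \{j, j+1\}$ (i.e.\ $y_i = x_i$ for $i \notin \{j-1, j, j+1\}$, matching the definition of $N_j(x)$), setting $y_j = 1$, and placing the pair $\{v_j^{(y)}, v_{j+1}^{(y)}\}$ as a unit-length interval nested inside the gap $[v_{j-1}, v_{j+2}]$ at the location that best tracks the original $\{v_j, v_{j+1}\}$. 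Concretely, I would like $\{v_j^{(y)}, v_{j+1}^{(y)}\}$ to be the unit interval inside $[v_{j-1}+1, v_{j+2}-1]$ closest to $\{v'_j, v'_{j+1}\}$; this only changes the two coordinates $v_j, v_{j+1}$ relative to $x$, which is exactly what moving within $N_j(x)$ allows (it redistributes $x_j - 1$ among $x_{j-1}$ and $x_{j+1}$). Then I argue term by term that $|v_i^{(y)} - v'_i|$ is no larger than $|v_i - v'_i|$ contributes to... — more precisely, that $d(x,y) + (\text{correction}) \leq d(x,x')$ by a triangle-inequality / 1-dimensional $L^1$-projection estimate, using that $y$ and $x'$ agree on all coordinates except possibly $j, j+1$ and on those two $y$ is the $\mathcal{R}_j$-point nearest to... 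Actually the cleanest formulation: I claim $g^{\ell-1}(y,\sigma,P) = g^{\ell-1}(x',\sigma,P)$ is NOT generally true, so instead I should choose $y$ to literally equal $x'$ when $x'$ already lies in $N_j(x)$, and otherwise show the cost only decreases. The honest route is: show that replacing $x'$ by the point $y \in N_j(x)$ obtained by "freezing the coordinates outside $\{j-1,j,j+1\}$ back to their $x$-values and re-optimizing the pair position" does not increase $g^{\ell-1}(\cdot,\sigma,P) + d(x,\cdot)$, because (a) $d(x,y) \leq d(x,x')$ since $y$ agrees with $x$ on more coordinates and is no farther on the rest, and (b) one needs $g^{\ell-1}(y,\sigma,P) \leq g^{\ell-1}(x',\sigma,P) + (d(x,x') - d(x,y))$ — which should follow by prepending the short swap sequence from $y$ to $x'$.

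The main obstacle I anticipate is handling step (b) cleanly: it is not a priori obvious that rerouting through $y$ instead of $x'$ is globally optimal, since $g^{\ell-1}$ could behave badly. I expect the resolution is to observe that any swap sequence realizing $P$ starting from $x'$ can be prepended with a $d(y, x')$-length swap sequence from $y$ to $x'$ (which exists and keeps $\sigma$ fixed by Lemma~\ref{lem:pathSamePermutation}), so $g^{\ell-1}(y,\sigma,P) \leq g^{\ell-1}(x',\sigma,P) + d(y,x')$; then I must verify $d(x,y) + d(y,x') \leq d(x,x') + (\text{something absorbable})$, i.e.\ that $y$ lies "$L^1$-between" $x$ and $x'$ in the relevant coordinates. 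This is where choosing $y$ as the $L^1$-projection of $x'$ onto $N_j(x)$ (treating $N_j(x)$ as the set of points agreeing with $x$ off $\{j-1,j,j+1\}$ with $x_j=1$) pays off: for a segment in $\mathbb{R}$, the projection $y$ of $x'$ satisfies $|x - y| + |y - x'| = |x - x'|$ coordinatewise whenever $y$ is between $x$ and $x'$, and when it is not, $y$ equals the nearer endpoint and one checks the inequality directly. I would organize the proof as: (1) prove "$\geq$" by the subset remark; (2) fix an optimal $x'$, define $y$ as this projection, verify $y \in N_j(x)$; (3) bound $d(x,y)$, apply Lemma~\ref{lem:pathSamePermutation} to get $g^{\ell-1}(y,\sigma,P) \leq g^{\ell-1}(x',\sigma,P) + d(y,x')$; (4) combine to get "$\leq$".
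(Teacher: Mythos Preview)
Your approach is correct and essentially the same as the paper's: fix a minimizer $x^\ast \in \mathcal{R}_j$, use the bound $g^{\ell-1}(y,\sigma,P) \le g^{\ell-1}(x^\ast,\sigma,P) + d(y,x^\ast)$ (which holds because the prepended sequence from Lemma~\ref{lem:pathSamePermutation} keeps $\sigma$ fixed and hence does not increase the signature length), and then exhibit $y \in N_j(x)$ with $d(x,y)+d(y,x^\ast)=d(x,x^\ast)$; the paper does this last step by a three-case analysis on where $v^\ast_j$ sits relative to $[v_j,v_{j+1}-1]$, which is exactly your clamping/$L^1$-projection. One small correction: the admissible range for $w_j$ when $y\in N_j(x)$ is $[v_j,\,v_{j+1}-1]$ (forced by $a,b\ge 0$ in the definition), not the wider interval $[v_{j-1}+1,\,v_{j+2}-1]$ you mention at one point --- with the correct range the geodesic equality holds in all three cases, as the paper verifies.
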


\begin{proof}
Let $x^\ast$ be a vector in $\mathcal{R}_j$ that attains the minimum of the LHS, that is,
\[
g^{\ell-1} (x^\ast, \sigma, P) + d (x, x^\ast)
=
\min_{x'\in \mathcal{R}_j} \left\{ g^{\ell-1} (x', \sigma, P) + d (x, x')\right\}.
\]
To simplify the notation, we denote $g^\ast = g^{\ell-1} (x^\ast, \sigma, P)$ and $g_y = g^{\ell-1} (y, \sigma, P)$ for a vector $y \in N_j(x)$.

Since $\mathcal{R}_j\supseteq N_j (x)$, it holds that, for any $y \in N_j (x)$,
\begin{align}\label{eq:fpt1}
g^\ast + d (x, x^\ast) \leq g_y + d (x, y).
\end{align}
The minimality of $g_y$ implies that $g_y \leq g^\ast + d(y, x^\ast)$.
Hence, it holds by~\eqref{eq:fpt1} that, for any $y \in N_j (x)$,  
\begin{equation}\label{eq:fpt2}
g^\ast + d (x, x^\ast) \leq g_y + d (x, y) \leq g^\ast + d(x, y) + d(y, x^\ast).
\end{equation}
To show the lemma, it suffices to find $y \in N_j (x)$ that satisfies~\eqref{eq:fpt2} with equality.

By definition, $d(x, y)=x_j -1$ for any $y \in N_j (x)$.
We denote $v_i =\sum_{h=0}^{i-1} x_h$, $v^\ast_i =\sum_{h=0}^{i-1} x^\ast_h$, and $w_i  =\sum_{h=0}^{i-1} y_h$
for any $i=1,\dots,\tilde{k}$.
Then, Lemma~\ref{lem:pathSamePermutation} implies that 
\[
d(x, x^\ast) = \sum_{i=1}^{\tilde{k}} |v_i - v^\ast_i| = X + |v_{j}-v^\ast_j| + |v_{j+1}-v^\ast_{j+1}| 
+ X',
\]
where we define  
$X = \sum_{i=1}^{j-1} |v_i - v^\ast_i|$
and 
$X' = \sum_{i=j+2}^{\tilde{k}} |v_i - v^\ast_i|$.
We distinguish three cases by the position of $v^\ast_j$.
Note that $v^\ast_{j+1} = v^\ast_j + 1$ since $x^\ast \in \mathcal{R}_j$.

First, suppose that $v^\ast_j$ satisfies that $v_j < v^\ast_j< v_{j+1}$.
Then, since $|v_{j}-v^\ast_j| + |v_{j+1}-v^\ast_{j+1}| = v^\ast_{j}-v_j + v_{j+1} - (v^\ast_{j}+1) = x_j -1$, it holds that $d(x, x^\ast) = X + x_j -1 + X'$.
We define $y\in N_j (x)$ so that $w_{j} = v^\ast_j$ and $w_{j+1} = v^\ast_{j+1}$.
Then, we have $d(y, x^\ast) = X + X'$, since $w_i = v_i$ for any $i\in \{1,\dots, j-1\}\cup\{j+2,\dots, \tilde{k}\}$.
Therefore, since $d(x, y) = x_j-1$, we obtain $d(x, x^\ast)=d(y, x^\ast)+d(x, y)$.
Thus, \eqref{eq:fpt2} holds with equality.

Next, suppose that $v^\ast_j$ satisfies that $v^\ast_j\leq  v_j$.
Then it holds that 
\begin{align*}
d(x, x^\ast) &= X +X'+ |v_{j}-v^\ast_j| + |v_{j+1}-v^\ast_{j+1}|\\
&= X + X' + (v_j-v^\ast_j) + (v_{j+1}-v_j + v_j - (v^\ast_{j}+1)) \\
& = X + X' + 2(v_j-v^\ast_j) + x_j-1
\end{align*}
since $v_{j+1}-v_j=x_j$.
We define $y\in N_j (x)$ so that $w_{j} = v_j$ and $w_{j+1} = v_{j}+1$.
Then, we have $d(y, x^\ast) = X + X' + 2(v_j-v^\ast_j)$, since $w_i = v_i$ for any $i\in \{1,\dots, j-1\}\cup\{j+2,\dots, \tilde{k}\}$.
Therefore, since $d(x, y)=x_j -1 $, we obtain $d(x, x^\ast)=d(y, x^\ast)+d(x, y)$, and hence~\eqref{eq:fpt2} holds with equality.

The case where $v^\ast_j\geq  v_{j+1}$ is analogous to the second case.
Therefore, in any case, there exists $y \in N_j (x)$ that satisfies \eqref{eq:fpt2} with equality.
Thus, the lemma holds.
\end{proof}

By Lemma~\ref{lem:rec2}, it holds that
\begin{align}
g^{\ell} (x, \sigma, P) &=
\min \biggl\{
\min_{1\leq j\leq \tilde{k}-1} \min_{y\in N_j(x)} \left\{ g^{\ell-1} (y, \sigma_j, P) + x_j \right\},
\min_{y\in N_h (x)}\left\{ g^{\ell-1} (y, \sigma, P')+ x_h -1 \right\}
\biggr\},\label{eq:fpt3}
\end{align}
in which we note $d(x, y) = x_j-1$ for $y \in N_j(x)$.

\begin{lemma}\label{lem:fptLinearExpression}
The function $g^\ell (x, \sigma, P)$ can be expressed as the minimum of linear functions on $x$.
That is, for any $\sigma \in \Sigma$ and $1\leq \ell \leq \ell_{\max}$, there exist vectors $c_1, \dots, c_p$ and real numbers $\delta_1, \dots, \delta_p$ such that  
\[
g^\ell (x, \sigma, P) = \min_{1\leq i\leq p} \{ c_i x + \delta_i\}
\]
for any $x\in \mathcal{R}$.
\end{lemma}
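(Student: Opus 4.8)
The plan is to prove the statement by induction on $\ell$, using the recursive formula~\eqref{eq:fpt3} together with Lemma~\ref{lem:rec2} and, crucially, Theorem~\ref{thm:fptrec} (which reduces the minimization over $N_j(x)$ to the two extreme points $y^j, {y'}^j$). The base case $\ell = 0$ is immediate: $g^0(x, \sigma, P)$ is either $0$ (if $\sigma$ already realizes $P$, i.e.\ every $\varphi(s_i)$ corresponds to an adjacent pair under $\sigma$, equivalently $P$ is the empty chain after deletions) or $+\infty$; in the finite case it is the constant function $0$, which is (trivially) a minimum of one linear function, and $+\infty$ can be represented by an empty collection or folded into the convention used elsewhere. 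The real content is the inductive step.

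For the inductive step, assume the statement holds for $\ell - 1$ and all $\sigma$, and all posets arising in the recursion (in particular $P'$). Fix $\sigma$ and the index $h$ with $s_1 = (\sigma(h), \sigma(h+1))$ (if no such $h$ exists the second branch is $+\infty$ and is dropped). By Theorem~\ref{thm:fptrec},
\[
g^{\ell}(x, \sigma, P) = \min\Bigl\{ \min_{1 \le j \le \tilde k - 1,\ y \in \{y^j, {y'}^j\}} \bigl\{ g^{\ell-1}(y, \sigma_j, P) + d(x, y) + 1 \bigr\},\ \min_{y \in \{y^h, {y'}^h\}} \bigl\{ g^{\ell-1}(y, \sigma, P') + d(x, y) \bigr\} \Bigr\}.
\]
The key observation is that for each fixed $j$, the vector $y^j$ (and likewise ${y'}^j$) is an \emph{affine function of $x$}: from the displayed formulas, $(y^j)_i = x_i$ for $i \notin \{j-1, j, j+1\}$, $(y^j)_{j-1} = x_{j-1}$, $(y^j)_j = 1$, $(y^j)_{j+1} = x_j + x_{j+1} - 1$, and symmetrically for ${y'}^j$; and $d(x, y^j) = x_j - 1$ is linear in $x$. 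By the induction hypothesis, $g^{\ell-1}(\cdot, \sigma_j, P)$ is a minimum of linear functions $\min_i \{ c_i z + \delta_i \}$ in its first argument $z$. Substituting the affine map $z = y^j(x)$ into each linear piece $c_i z + \delta_i$ yields a linear function of $x$ (an affine function composed with a linear functional is affine, and the constant $\delta_i$ plus the affine part gives $c_i' x + \delta_i'$), and then adding the linear term $d(x, y^j) + 1 = x_j$ keeps it linear in $x$. Hence each summand $g^{\ell-1}(y^j(x), \sigma_j, P) + d(x, y^j) + 1$ is again a minimum of finitely many linear functions of $x$. The same applies to the ${y'}^j$ terms and to the two $P'$-branch terms. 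Taking the outer minimum over the finitely many choices of $j$ and of $y \in \{y^j, {y'}^j\}$ (and the two $P'$-terms) is a minimum of a minimum of linear functions, which is again a minimum of linear functions; collecting all the pieces into a single list $c_1, \dots, c_p$, $\delta_1, \dots, \delta_p$ gives the claimed form. (If $g^{\ell-1}$ is identically $+\infty$ on the relevant substructure, that branch contributes nothing; if all branches are $+\infty$ then $g^\ell \equiv +\infty$, handled by the same convention as the base case.)

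The main obstacle—really the only subtlety—is making precise that substituting an affine change of variables into a "min of linear functions" and adding a linear term preserves the class, and that this composition is valid \emph{as a function on all of $\mathcal{R}$} rather than just pointwise, i.e.\ that the list of linear pieces can be chosen uniformly in $x$. This is routine once one writes $y^j(x) = A_j x + b_j$ explicitly for a fixed $0$–$1$ matrix $A_j$ and constant vector $b_j$ read off from the formulas above, so that $c_i \cdot y^j(x) + \delta_i = (c_i^\top A_j) x + (c_i^\top b_j + \delta_i)$; the new coefficient vector $c_i^\top A_j$ and scalar $c_i^\top b_j + \delta_i$ depend only on $i, j$ and not on $x$. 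One should also note that the number of linear pieces $p$ may grow with $\ell$ (roughly multiplied by $O(\tilde k)$ at each step), but since $\ell \le \ell_{\max}$ is bounded by a function of $k$, $p$ remains bounded by a function of $k$ as well—this is exactly what will be needed downstream to get a fixed-parameter running time, though the lemma as stated only asserts existence.
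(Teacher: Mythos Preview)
Your argument has a circular-dependency problem. You invoke Theorem~\ref{thm:fptrec} ``crucially'' in the inductive step, but in the paper Theorem~\ref{thm:fptrec} is proved \emph{after} Lemma~\ref{lem:fptLinearExpression}, and its proof explicitly appeals to the Lemma (``Since $g^{\ell-1}$ is in the form of the minimum of linear functions by Lemma~\ref{lem:fptLinearExpression}\dots''). So, in the logical order of the paper, the Theorem is not available when proving the Lemma; as written, you are assuming what you set out to prove.

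The paper's own proof avoids this by working directly with~\eqref{eq:fpt3}, i.e.\ with the minimum over all of $N_j(x)$ (obtained from Lemma~\ref{lem:rec2} alone). It parametrizes every $y\in N_j(x)$ as $y=x+a e_{j-1}-(x_j-1)e_j+b e_{j+1}$ with $a+b=x_j-1$, and observes that, for each linear piece $c_i' z+\delta_i'$ of $g^{\ell-1}$ and each fixed $(a,b,j)$, the resulting expression is a linear function of $x$. No reduction to the two extreme points $y^j,{y'}^j$ is needed at this stage; that reduction is the content of Theorem~\ref{thm:fptrec}, derived afterwards \emph{from} the Lemma.

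Your argument is easily repaired in either of two ways. The cleanest is to drop Theorem~\ref{thm:fptrec} and follow the paper's direct route via~\eqref{eq:fpt3}; everything else you wrote (that $y^j,{y'}^j$ are affine in $x$, that $d(x,y^j)=x_j-1$ is linear, and that the class ``min of finitely many affine functions'' is closed under affine precomposition, addition of an affine term, and taking finite minima) is correct and is exactly the mechanism the paper uses as well. Alternatively, you could recast the whole thing as a \emph{simultaneous} induction on $\ell$ proving the Lemma and Theorem~\ref{thm:fptrec} together: since the proof of Theorem~\ref{thm:fptrec} at level $\ell$ only needs the Lemma at level $\ell-1$, this is logically sound---but you must say so explicitly, rather than invoking the Theorem as an already-established black box.
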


\begin{proof}
We prove this lemma by induction on $\ell$.
In the base case where $\ell =0$, the poset $P$ is empty, and $\sigma$ is not changed.
Thus, $g^0 (x, \sigma, P) =0$.
Suppose that $\ell \geq 1$.

By~\eqref{eq:fpt3}, it suffices to show that, for any $j$ and $\sigma'$, 
\[
\min_{y\in N_j(x)} \left\{ g^{\ell-1} (y, \sigma', P) + x_j \right\}
\]
can be expressed as the minimum of linear functions.
By the induction hypothesis, $g^{\ell-1}$ is the minimum of linear functions.
Hence, the above can be expressed as
\[
\min_{y\in N_j(x)} \left\{\min_{1\leq i\leq p'} \{ c'_i y + \delta'_i\} + x_j \right\}
=
\min_{1\leq i\leq p'} \min_{y\in N_j(x)} \left\{ c'_i y + x_j + \delta'_i \right\}
\]
for some linear functions $c'_i y +\delta'_i$ for $i=1,\dots, p'$.

Recall that, for every $y \in N_j (x)$, we can write $y = x + a e_{j-1} - (x_j-1) e_j + b e_{j+1}$ for some $a, b \in \mathbb{Z}_+$ with $a + b = x_j -1$. 
Hence, it holds that
\[
c'_i y + x_j + \delta'_i =  
c'_i (x + a e_{j-1} - (x_j-1) e_j + b e_{j+1}) + x_j + \delta'_i,
\]
which is a linear function on $x$ for given $a$, $b$, and $j$.
Therefore, the lemma holds by induction.
\end{proof}

With Lemma \ref{lem:fptLinearExpression}, we are ready to prove Theorem \ref{thm:fptrec}.

\begin{proof}[Proof of Theorem~\ref{thm:fptrec}]
By~\eqref{eq:fpt3}, it suffices to show that, for any $\sigma \in \Sigma$ and $1 \le j \le \tilde{k}-1$, either $y^j$ or ${y'}^j$ achieves the minimum of 
\[
\min_{y\in N_j(x)} \left\{ g^{\ell-1} (y, \sigma, P) + x_j \right\}.
\]
Since $g^{\ell-1}$ is in the form of the minimum of linear functions by Lemma~\ref{lem:fptLinearExpression}, the above can be expressed as
\[
\min_{1\leq i\leq p'} \min_{a, b \in \mathbb{Z}_+: a + b = x_j -1} \left\{ c'_i (x + a e_{j-1} - (x_j-1) e_j + b e_{j+1}) + x_j + \delta'_i \right\}
\]
for some linear functions $c'_i y +\delta'_i$ for $i=1,\dots, p'$.
For given $x$ and $j$, the minimum is attained either at $(a, b)=(x_j-1, 0)$ or at $(a, b)=(0, x_j-1)$.
They correspond to $y^j$ and ${y'}^j$, respectively.
Thus, the theorem holds.
\end{proof}

Let $f_0$ be the initial placement and $\mathsf{sig}(f_0)=(x^0, \sigma^0)$.
Let $v^0_i = \sum_{j=0}^{i-1} x^0_{j}$ for each $i$. 
For $i=0, 1, \dots , \tilde k -1$, 
define $B_i = \{ f_0(v^0_i +1), f_0(v^0_i + 2), \dots , f_0(v^0_{i+1}-1)\}$, which are tokens not in $\tilde{T}$. 
Then, $|B_i| = x^0_i - 1$. 
Theorem~\ref{thm:fptrec} shows that, when we compute $g^{\ell} (x, \sigma, P)$ by using the formula, 
it suffices to consider signatures $(y, \sigma')$ such that 
all tokens in $B_i$ appear consecutively along the path. 
That is, we only consider vectors $y$ such that, for each $i$,  
$y_i - 1 = \sum_{h=i'}^{j'} |B_h|$ for some $0 \le i', j' \le \tilde k -1$ (possibly, $y_i - 1 =0$). 
This shows that each $y_i$ can take one of $O({\tilde k}^2)$ values, and hence 
$y$ has $O({\tilde k}^{2\tilde{k}})$ choices. 
Therefore, 
the size of the DP table is $O(k\cdot k! \ell_{\max} {\tilde k}^{2\tilde{k}})$, and each step can be processed in a fixed-parameter time.
Thus, we have the following theorem.
\begin{theorem}
\label{thm:fpt}
There exists a fixed-parameter algorithm for \textsc{Qubit Routing} when $G$ is a path and $k=|S|$ is a parameter.
\qed
\end{theorem}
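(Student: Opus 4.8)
The plan is to assemble the machinery developed above into a dynamic program over the signature space and to bound the resulting table by a function of $k$. First I would invoke the linear-extension reduction from the beginning of this section: it suffices to solve the problem when $P$ is a chain $s_1 \prec s_2 \prec \dots \prec s_k$, running the chain algorithm on each of the at most $k!$ linear extensions of $P$. Fix such a chain. The DP value is $g^{\ell}(x,\sigma,P)$, the minimum length of a swap sequence of signature length at most $\ell$ that realizes $P$ from a placement with signature $(x,\sigma)$; the answer to the instance is $g^{\ell_{\max}}(x^0,\sigma^0,P)$, where $(x^0,\sigma^0)=\mathsf{sig}(f_0)$ and $\ell_{\max}=((2k)!+1)k$ is the signature-length bound established above, and an actual optimal swap sequence is recovered by the usual back-tracking through the filled table.

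Next I would take the recursion of Theorem~\ref{thm:fptrec} as the DP transition: from $(x,\sigma)$ with budget $\ell$ one branches over $1 \le j \le \tilde k - 1$ and over the two candidates $y \in \{y^j, {y'}^j\}$, either changing the bijection to $\sigma_j$ at cost $d(x,y)+1$ and budget $\ell-1$, or, when $s_1 = (\sigma(h),\sigma(h+1))$, realizing $s_1$ and recursing on $P' = s_2 \prec \dots \prec s_k$ at cost $d(x,y)$ and budget $\ell-1$; here $d(\cdot,\cdot)$ is the closed-form quantity of Lemma~\ref{lem:pathSamePermutation}. The key point is to bound the number of vectors $x$ over which the table must range. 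In $f_0$, the $x^0_i - 1$ tokens outside $\tilde T$ lying strictly between the $i$-th and $(i+1)$-st tokens of $\tilde T$ form a block $B_i$, so $|B_i| = x^0_i - 1$. Every transition of Theorem~\ref{thm:fptrec} only moves the two tokens $\sigma(j),\sigma(j+1)$ flanking one gap, so in every reachable signature each block $B_i$ still occupies a contiguous stretch of the path; hence each coordinate $y_i$ is $1$ plus the total size of a contiguous run $B_{i'}, \dots, B_{j'}$ of the original blocks (possibly empty), leaving $O(\tilde k^2)$ values for each $y_i$ and $O(\tilde k^{2\tilde k})$ values for $y$. Consequently the table has $O(k \cdot k! \cdot \ell_{\max} \cdot \tilde k^{2\tilde k})$ cells --- all bounded by a function of $k$ since $\tilde k \le 2k$ --- and each cell is filled by examining $O(\tilde k)$ choices of $j$, two candidates $y$, and a polynomial-time evaluation of $d$; multiplying by the $k!$ linear extensions keeps the whole algorithm fixed-parameter in $k$.

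The step I expect to demand the most care is making the block invariant precise and using it uniformly: one must verify that the candidate vectors $y^j, {y'}^j$ from Theorem~\ref{thm:fptrec}, together with the swap sequences of length $d(x,y)$ produced by Lemma~\ref{lem:pathSamePermutation}, leave intact every token not among $\sigma(j),\sigma(j+1)$, so that the $O(\tilde k^2)$-per-coordinate count applies to every signature arising in the recursion and not merely to the immediate neighbors of $f_0$. Once this bookkeeping is settled, correctness of the recursion is exactly Theorem~\ref{thm:fptrec}, and the running time is the product of the table size and per-cell cost, both a function of $k$ times a polynomial in the input size.
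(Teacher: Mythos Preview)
Your proposal is correct and follows essentially the same route as the paper: reduce to chains via linear extensions, run the dynamic program on $g^{\ell}(x,\sigma,P)$ using the two-candidate recursion of Theorem~\ref{thm:fptrec}, and bound the reachable $x$-vectors by the block invariant (each gap is a contiguous run $B_{i'},\dots,B_{j'}$ of the initial non-$\tilde T$ blocks), yielding the same $O(k\cdot k!\cdot \ell_{\max}\cdot \tilde k^{2\tilde k})$ table. Your flagging of the block invariant as the delicate step is apt---the paper is terse there---but the argument you sketch (each $y^j,{y'}^j$ transition slides one $\tilde T$ token across an entire gap, so blocks never split and non-$\tilde T$ tokens keep their relative order) is exactly what is needed.
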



\section{Polynomial-time Algorithm for Disjoint Two-Qubit Operations}

We say that an instance $(G, P = (S, \preceq), T, \varphi, f_0)$ of \textsc{Qubit Routing} has \emph{disjoint pairs} 
if $\varphi(s) \cap \varphi(s') = \emptyset$ for any pair of distinct elements $s, s' \in S$. 
The objective of this section is to give a polynomial-time algorithm for instances with disjoint pairs when the graph $G$ is a path. 

\begin{theorem}\label{thm:disjoint}
 \textsc{Qubit Routing} can be solved in polynomial time when a given graph is a path and the instance has disjoint pairs.
\end{theorem}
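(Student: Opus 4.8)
The plan is to reduce the problem, in the disjoint-pairs case with $G$ a path, to a shortest-path computation in a polynomially-sized auxiliary graph, mirroring the structure of Section~\ref{sec:fpt} but exploiting disjointness to collapse the state space. As in Section~\ref{sec:fpt}, I would first reduce to the case where $P$ is a chain $s_1 \prec s_2 \prec \dots \prec s_k$, but now $k$ is not a parameter, so the linear-extension trick is unavailable. Instead, I would argue that, when the pairs are disjoint, there is no interaction between distinct gate operations beyond the order in which they must be realized: once a pair $\varphi(s_j) = \{a_j, b_j\}$ has been made adjacent and the operation performed, the two tokens $a_j, b_j$ never again need to be close together (disjointness guarantees they appear in no later pair). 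So the relevant state after processing the first $j$ operations is only the \emph{cyclic/linear order} in which the tokens and the "bundles" of untouched tokens currently sit on the path, together with the gap sizes — and, crucially, between consecutive realized operations only a bounded number of tokens actually need to move.

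The key steps, in order, would be: (1) Formalize a notion of signature as in Section~\ref{sec:fpt}, tracking only the positions of the tokens in $\tilde T = \bigcup_s \varphi(s)$; here $|\tilde T| = 2k$ may be large, so I cannot enumerate all signatures. (2) Prove a locality lemma: in a shortest feasible swap sequence, between the realization of $s_j$ and that of $s_{j+1}$, it suffices to move at most the two tokens of $\varphi(s_{j+1})$ toward each other (or a similarly bounded set), analogously to the roles of $y^j$ and $y'^j$ in Theorem~\ref{thm:fptrec}; disjointness is what makes this work, because the tokens of $\varphi(s_j)$ can be "left where they are" and treated as part of an inert block. (3) Show that, consequently, the order of the tokens along the path is essentially monotone — more precisely, that one may assume the sequence of token placements passes through a polynomial number of "canonical" configurations, each described by a permutation-like arrangement that changes in a controlled way from one realized operation to the next. (4) Build a layered digraph with one layer per index $j \in \{0,1,\dots,k\}$, whose vertices in layer $j$ are the polynomially-many canonical configurations consistent with $s_1,\dots,s_j$ having been realized, with arc lengths given by the minimal number of swaps (computable by the $\ell_1$-type formula of Lemma~\ref{lem:pathSamePermutation}) to pass from one configuration to the next while realizing $s_{j+1}$. (5) Output a shortest $s$–$t$ path.

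The main obstacle, I expect, is step (3): proving that an optimal solution can be assumed to keep the tokens in a sufficiently rigid order so that the per-layer state space is polynomial. The subtlety is that realizing $\varphi(s_{j+1}) = \{a,b\}$ may require pulling $a$ and $b$ past other tokens, and a priori the optimal way to do this could permute many tokens; I must show that any such permutation is either unnecessary or can be deferred, using an exchange argument on swap sequences (if two "foreign" tokens are swapped with each other, that swap can be removed or postponed without increasing the length, because disjointness means no future operation distinguishes them until their own turn comes). Once this rigidity is established, bounding the number of canonical configurations per layer by a polynomial in $n$ and $k$ — using that each configuration is determined by where the at most $O(1)$ moved tokens ended up relative to the inert blocks, as in the $O(\tilde k^{2\tilde k})$ count at the end of Section~\ref{sec:fpt} but now with a \emph{fixed} exponent — and then invoking a standard shortest-path algorithm, is routine.
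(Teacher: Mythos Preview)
Your approach is fundamentally different from the paper's, and the gap you yourself flag at step~(3) is real and not closable along the lines you sketch. The paper bypasses all of the layered-DP machinery by exhibiting a closed-form formula for the optimum. Define $\mathsf{gap}(f) = \sum_{s \in S} \mathsf{gap}_f(s)$, where $\mathsf{gap}_f(s)$ is one less than the distance on the path between the two positions of $\varphi(s)$, and let $\mathsf{cross}(f)$ be the number of unordered pairs $\{s,s'\}$ whose intervals on the path cross. The paper proves that the optimal length is exactly $\mathsf{value}(f_0) := \mathsf{gap}(f_0) - \mathsf{cross}(f_0)$. The upper bound is a greedy construction: a particular swap (at the rightmost left-endpoint of a nontrivial pair) always decreases $\mathsf{value}$ by exactly one. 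The lower bound is the observation that no single swap can decrease $\mathsf{value}$ by more than one. A striking by-product is that the poset $P$ is irrelevant: the greedy sequence terminates at a placement in which \emph{all} pairs are simultaneously adjacent, so it is feasible for every partial order on $S$.

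Against this, your plan has two structural problems. First, you never actually justify the reduction to a chain: you observe that enumerating linear extensions is too expensive and then proceed to process $s_1,\dots,s_k$ in a fixed order, but nothing in your framework shows that every linear extension of $P$ has the same optimal value. (The paper's potential-function argument yields this for free, but your DP does not.) Second, the exchange argument you propose for step~(3) is incorrect. You claim that swapping two ``foreign'' tokens---tokens not in $\varphi(s_{j+1})$---can be removed or postponed because ``no future operation distinguishes them until their own turn comes.'' But each foreign token may itself belong to some later pair $\varphi(s_p)$, and swapping two such tokens now can simultaneously shrink both of their gaps; this is precisely the phenomenon captured by the $\mathsf{cross}$ term in the paper's formula, and it is why the naive lower bound $\mathsf{gap}(f_0)$ is not tight. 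So the swap is genuinely useful and cannot be deferred without cost, and the per-layer state space does not collapse in the way you need. I would abandon the layered shortest-path plan here and instead look for the right potential function.
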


Let $(G, P = (S, \preceq), T, \varphi, f_0)$ be an instance with disjoint pairs. 
Since $G$ is a path, we suppose for simplicity that $V=\{1,2,\dots, n\}$ and $E=\{\{i, i+1\}\mid i=1,2, \dots, n-1\}$. 
Let $f$ be a token placement.
For an element $s \in S$ with $f^{-1}(\varphi(s)) = \{\alpha_1, \alpha_2\}$, $\alpha_1 < \alpha_2$, we define $\mathsf{gap}_{f}(s) := \alpha_2- \alpha_1 - 1$. 
Then $\mathsf{gap}_{f}(s)$ is a lower bound on the number of swaps to realize $\varphi(s)$ if the initial token placement is $f$.
For distinct elements $s, s' \in S$ such that $f^{-1}(\varphi(s)) = \{\alpha_1, \alpha_2\}$, $\alpha_1 < \alpha_2$, and $f^{-1}(\varphi(s')) = \{\beta_1, \beta_2\}$, $\beta_1 < \beta_2$, 
we say that $s$ and $s'$ \emph{cross} if $\alpha_1 < \beta_1 < \alpha_2 < \beta_2$ or $\beta_1 < \alpha_1 < \beta_2 < \alpha_2$. 
One can observe that, if $S=\{s, s'\}$ such that $s$ and $s'$ cross, we can realize both $\varphi(s)$ and $\varphi(s')$ by $\mathsf{gap}_{f}(s)+\mathsf{gap}_{f}(s')-1$ swaps.
Note that $\alpha_1, \alpha_2, \beta_1$, and $\beta_2$ are always distinct since the instance has disjoint pairs. 
We define 
\begin{align*}
\mathsf{gap}(f) &:= \sum_{s \in S} \mathsf{gap}_{f}(s), \\
\mathsf{cross}(f) &:= \left|\left\{ \{s, s'\} \in \binom{S}{2} \Bigm\vert \mbox{$s$ and $s'$ cross} \right\}\right|, \\
\mathsf{value}(f) &:= \mathsf{gap}(f) - \mathsf{cross}(f), 
\end{align*}
where $\mathsf{gap}$, $\mathsf{cross}$, and $\mathsf{value}$ are regarded as functions only in $f$ by fixing $G$, $P$, $T$, and $\varphi$.

We prove the following proposition, which completes the proof of Theorem \ref{thm:disjoint}.  
\begin{proposition}\label{prop:disjointvalue}
The optimal value for the instance $(G, P = (S, \preceq), T,  \varphi, f_0)$ is $\mathsf{value}(f_0)$.
Such a sequence can be constructed in polynomial time.
\end{proposition}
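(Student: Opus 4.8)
The plan is to establish two inequalities: that $\mathsf{value}(f_0)$ is a lower bound on the optimal value, and that there is a feasible swap sequence of length exactly $\mathsf{value}(f_0)$ realizing $P$, which can be built in polynomial time. The lower bound is the more delicate direction, so I would treat it first. The key invariant is that $\mathsf{value}$ cannot decrease too quickly under a single swap: I claim that for any swap $f \leadsto f'$ we have $\mathsf{value}(f') \geq \mathsf{value}(f) - 1$. To see this, a swap on edge $\{i,i+1\}$ changes $\mathsf{gap}_{f}(s)$ by at most $1$ and only for those $s$ whose token pair occupies exactly one of the positions $i, i+1$; since the pairs are disjoint, at most two elements $s$ are affected, and $\mathsf{gap}$ changes by at most... here one has to be careful, because naively two gaps could each shrink by $1$, giving a drop of $2$ in $\mathsf{gap}$. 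The resolution is that a swap that shrinks two gaps simultaneously must create (or was at) a crossing: if the swap moves the pairs of $s$ and $s'$ closer, then before the swap $s$ and $s'$ were interleaved in a way that makes them cross after (or vice versa), so $\mathsf{cross}$ drops by at least $1$ as well, keeping the net change in $\mathsf{value}$ at $\geq -1$. I would prove this by a careful case analysis on the positions of the (at most two) affected pairs relative to positions $i$ and $i+1$. Combined with the observation that if $f$ realizes $\varphi(s)$ then $\mathsf{gap}_f(s) = 0$, and that a token placement realizing $P$ at its final configuration forces... actually the cleanest statement: along a feasible sequence $f_0 \leadsto \dots \leadsto f_\ell$ with realization times $i_1 \le \dots \le i_k$, one argues that the number of swaps needed is at least $\mathsf{value}(f_0)$ by tracking how $\mathsf{gap}$ must be "paid down" for each $s$ by time $i_s$, with crossings accounting for the savings from handling two pairs together. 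I expect the honest lower-bound argument to route through the per-swap invariant $\mathsf{value}(f') \ge \mathsf{value}(f) - 1$ together with the fact that any placement realizing every $\varphi(s)$ in order has $\mathsf{value} \le$ (something controlled), but since crossings only ever help, the simplest clean claim is: the last placement $f_\ell$ in an optimal sequence can be assumed to be reached with no wasted moves, and $\mathsf{value}(f_0) \le \ell$ follows from the invariant once we check $\mathsf{value} \ge 0$ always and $\mathsf{value}$ of a placement realizing the whole chain is $0$ — but $P$ is only a poset, so I would instead argue directly that each swap "makes progress" of at most $1$ toward the terminal condition measured by $\mathsf{value}$.

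For the upper bound and the construction, I would proceed greedily and processing $S$ in a linear extension order. The idea: repeatedly take a minimal element $s$ of the remaining poset, and move the two tokens of $\varphi(s)$ together. When we bring the pair of $s$ together using $\mathsf{gap}_f(s)$ swaps by moving both endpoints inward, every other pair $s'$ that crosses $s$ has its own gap reduced by exactly $1$ (its interleaved endpoint gets pushed across), so the potential $\mathsf{value}$ drops by exactly $\mathsf{gap}_f(s)$ — the $\mathsf{gap}_f(s)$ cost is offset in the accounting by: a net decrease of $\mathsf{gap}$ by $\mathsf{gap}_f(s) + \mathsf{cross}_s$ where $\mathsf{cross}_s$ counts pairs crossing $s$, and a decrease of $\mathsf{cross}$ by exactly $\mathsf{cross}_s$, for a net $\mathsf{value}$ decrease of $\mathsf{gap}_f(s)$, matching the number of swaps used. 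The subtle point is to check that after realizing $\varphi(s)$ we can simply delete $s$ and the two tokens (they never need to move again, by disjointness, since no future $\varphi(s')$ involves them) and recurse on the remaining instance, and that $\mathsf{value}$ of the recursively-obtained subinstance equals $\mathsf{value}(f) - \mathsf{gap}_f(s)$ for the current $f$. One must also verify that when a pair is brought together, no crossing among two other pairs $s', s''$ (neither equal to $s$) is created or destroyed; this holds because those pairs' endpoints only get shifted uniformly and their relative interleaving pattern is preserved except where $s$'s endpoints pass through.

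Summing the costs over the linear extension gives a total of $\sum_s \mathsf{gap}_{f^{(s)}}(s)$ where $f^{(s)}$ is the placement just before processing $s$; the telescoping of the $\mathsf{value}$ potential shows this sum equals $\mathsf{value}(f_0)$, matching the lower bound. The construction runs in polynomial time: we compute a linear extension, and for each of the $|S| \le n$ elements we perform at most $n$ swaps and update positions, so the whole algorithm is $O(n^2)$ or so. The main obstacle I anticipate is the lower-bound case analysis — specifically, verifying rigorously that a single swap can decrease $\mathsf{gap}$ by $2$ only when it simultaneously decreases $\mathsf{cross}$ by at least $1$, handling all configurations of how the two affected disjoint pairs sit around the swapped edge (one pair with an endpoint at $i$, the other at $i+1$; the various left/right placements of the partner endpoints). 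Everything else is bookkeeping on the potential function, but that invariant is where the disjointness hypothesis is really used and where a careless argument would break.
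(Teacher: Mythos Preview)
Your lower-bound plan is exactly the paper's: show that a single swap on $\{i,i{+}1\}$ drops $\mathsf{value}$ by at most $1$, via the case split on how many pairs have an endpoint at $i$ or $i{+}1$. That part is fine. (To close the argument cleanly, note that when $\varphi(s)$ is realized one has $\mathsf{gap}_f(s)=0$ and hence $s$ crosses nothing, so deleting $s$ from $S$ leaves $\mathsf{value}$ unchanged; once everything is deleted the value is $0$, giving $\ell\ge\mathsf{value}(f_0)$.)

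Your upper-bound construction, however, is broken. The assertion ``every other pair $s'$ that crosses $s$ has its own gap reduced by exactly $1$'' is false: the interleaved endpoint of $s'$ can be pushed \emph{away} from its partner rather than toward it. Concretely, take $\varphi(s)$ at positions $\{2,7\}$, $\varphi(s')$ at $\{1,5\}$, $\varphi(s'')$ at $\{3,8\}$; all three pairs mutually cross, $\mathsf{gap}=11$, $\mathsf{cross}=3$, $\mathsf{value}=8$. If you process $s$ first and bring its tokens together---at \emph{any} meeting point $\{c,c{+}1\}$ with $2\le c\le 6$---you spend $4$ swaps but the residual instance on $\{s',s''\}$ has value at least $6$, so your scheme outputs at least $10$. (E.g.\ $c=4$ leaves $s'$ at $\{1,6\}$ and $s''$ at $\{2,8\}$, with $\mathsf{value}=8$ unchanged.) There is no meeting point that serves both crossings of $s$ at once, because the $s'$-endpoint at $5$ wants to be pushed left while the $s''$-endpoint at $3$ wants to be pushed right. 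Processing in a linear extension of $P$ does not help; the poset is irrelevant here.

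The paper avoids this by working one swap at a time rather than one pair at a time: it sets $i^*$ to be the largest $i$ for which some $s^*$ has $f^{-1}(\varphi(s^*))=\{i,j\}$ with $j\ge i{+}2$, swaps on $\{i^*,i^*{+}1\}$, and checks directly that $\mathsf{value}$ drops by exactly $1$ while $(\mathsf{gap},\mathsf{value})$ decreases lexicographically. Iterating reaches $\mathsf{gap}=0$, at which point \emph{every} $\varphi(s)$ is adjacent simultaneously, so the sequence is feasible for any poset. Your pair-by-pair idea could be salvaged---in the example, processing the pair with the rightmost left endpoint ($s''$) and sliding that endpoint all the way right does give value $4$ after $4$ swaps---but you would need to identify and prove such a selection rule; the telescoping you wrote down does not hold for an arbitrary minimal element of $P$.
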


To prove Proposition \ref{prop:disjointvalue},
we first show that the optimal value is at most $\mathsf{value}(f_0)$. 

\begin{lemma}\label{lem:disjointupper}
The instance $(G, P = (S, \preceq), T, \varphi, f_0)$ has a feasible swap sequence of length $\mathsf{value}(f_0)$, and such a swap sequence can be found in polynomial time.
\end{lemma}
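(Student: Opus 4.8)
The plan is to prove Lemma~\ref{lem:disjointupper} by explicitly constructing a feasible swap sequence of length $\mathsf{value}(f_0)$, proceeding element by element through the chain (after reducing to the chain case by taking a linear extension, as in Section~\ref{sec:fpt}; since the pairs are disjoint this does not affect the argument). Order the elements of $S$ as $s_1 \prec s_2 \prec \dots \prec s_k$. The idea is to realize $\varphi(s_1)$ first by moving its two tokens toward each other along the path, then $\varphi(s_2)$, and so on, always moving tokens by the shortest route and never moving a token that is not currently needed. The key accounting claim is that when we realize $\varphi(s_j)$ starting from the current placement $f$, we pay exactly $\mathsf{gap}_f(s_j)$ swaps, and — crucially — the swaps used to realize $\varphi(s_1), \dots, \varphi(s_{j-1})$ can be chosen so that the total is $\mathsf{gap}(f_0) - \mathsf{cross}(f_0)$, i.e. each crossing pair saves exactly one swap and no other interactions occur.

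First I would make precise the "move toward each other" primitive: given a placement $f$ and a pair $\varphi(s_j)$ occupying positions $\alpha_1 < \alpha_2$, we perform $\alpha_2 - \alpha_1 - 1$ adjacent swaps that slide the token at $\alpha_1$ rightward (equivalently slide the token at $\alpha_2$ leftward, or any mixture) until the two tokens are adjacent; this realizes $\varphi(s_j)$ at cost $\mathsf{gap}_f(s_j)$ and leaves all tokens outside the open interval $(\alpha_1, \alpha_2)$ fixed. Then I would track how $\mathsf{gap}$ and $\mathsf{cross}$ of the remaining elements change. The central observation is the two-element fact already noted in the text: if $s$ and $s'$ cross with respect to $f$, then realizing $\varphi(s)$ first (sliding appropriately) decreases $\mathsf{gap}_f(s')$ by exactly one, because exactly one endpoint of $\varphi(s')$ lies strictly between the endpoints of $\varphi(s)$ and gets pushed one step closer; if $s$ and $s'$ do not cross (nested or disjoint as intervals), then $\mathsf{gap}_{f'}(s') = \mathsf{gap}_f(s')$ provided we choose the sliding direction consistently. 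I would isolate this as a short sub-lemma and then argue by induction on $k$: realizing $\varphi(s_1)$ costs $\mathsf{gap}_{f_0}(s_1)$, and the resulting placement $f_1$ on the remaining chain $s_2 \prec \dots \prec s_k$ satisfies $\mathsf{value}_{f_1}(\{s_2,\dots,s_k\}) = \mathsf{value}_{f_0}(\{s_2,\dots,s_k\}) - (\text{number of } s_j, j\ge 2, \text{ crossing } s_1)$, so that $\mathsf{gap}_{f_0}(s_1) + \mathsf{value}_{f_1} = \mathsf{gap}_{f_0}(s_1) + \mathsf{gap}_{f_0}(\{s_2,\dots\}) - \mathsf{cross}_{f_0}(\{s_2,\dots\}) - (\#\text{crossings with }s_1) = \mathsf{value}(f_0)$, closing the induction.

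The main obstacle I anticipate is controlling the sliding direction so that non-crossing pairs genuinely contribute zero net change to $\mathsf{gap}$ — naively sliding one endpoint of $\varphi(s_1)$ all the way across could drag it through a nested pair $\varphi(s')$ and momentarily distort things, and more seriously it could move tokens that belong to later pairs in a way that changes their gaps. The fix is to choose, for each step, a sliding that never crosses an endpoint of an as-yet-unrealized pair when avoidable; because the pairs are disjoint, the endpoints of $\varphi(s_1)$ and $\varphi(s')$ are four distinct positions, and whether $s'$ must lose a gap unit is determined purely by the interval relation (crossing vs. nested vs. separated), independent of which of the two tokens of $\varphi(s_1)$ we actually move. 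I would verify this invariance by a small case analysis on the relative order of the four endpoints, which is exactly the place where disjointness of pairs is used. Finally, polynomial-time constructibility is immediate: the sequence has length $\mathsf{value}(f_0) \le \binom{n}{2}$, and each of the $k \le |S|$ phases is computed by scanning the current placement, so the whole construction runs in time polynomial in $n$ and $|S|$.
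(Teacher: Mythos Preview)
Your approach differs substantially from the paper's and contains a genuine gap. The paper does not process the elements of $S$ one at a time; instead it performs one swap per step, always on the edge $\{i^*, i^*+1\}$ where $i^*$ is the largest position that is a left endpoint of some pair with positive gap, and shows by a short case analysis that this single swap decreases $\mathsf{value}$ by exactly one (and never increases $\mathsf{gap}$). After $\mathsf{value}(f_0)$ such swaps every pair is \emph{simultaneously} adjacent, which is why the resulting sequence is feasible for every poset on $S$, not just the chosen linear extension.

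Your per-pair sliding argument breaks at the invariance claim: when $s_1$ is crossed from both sides, no sliding direction (and no meeting point) decreases the gap of every crossing pair. Concretely, take $f_0^{-1}(\varphi(s_1)) = \{3,7\}$, $f_0^{-1}(\varphi(s_2)) = \{1,5\}$, $f_0^{-1}(\varphi(s_3)) = \{4,9\}$; all three pairs pairwise cross and $\mathsf{value}(f_0)=10-3=7$. Any sequence of $\mathsf{gap}_{f_0}(s_1)=3$ swaps that makes the tokens of $s_1$ adjacent must move those two tokens toward each other at every step, and for each of the four possible meeting points one checks that the resulting placement $f_1$ has $\mathsf{value}_{f_1}(\{s_2,s_3\}) \ge 6$, so your recursion yields at least $3+6=9$ swaps rather than $7$. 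The reason is that the interior endpoint of a crossing pair gets pushed one step \emph{farther} from its partner whenever it is passed by the wrong token of $s_1$; with crossings on both sides this is unavoidable for at least one of them. If $P$ is already the chain $s_1 \prec s_2 \prec s_3$ you have no freedom in the processing order, so the issue cannot be sidestepped by choosing a better linear extension.
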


\begin{proof}
We prove the statement by induction on the lexicographic ordering of $(\mathsf{gap}(f_0), \mathsf{value}(f_0))$. 
Note that $\mathsf{gap}(f_0) \ge 0$ and $\mathsf{value}(f_0) \ge - |\binom{S}{2}|$.\footnote{Although we can show a better lower bound on $\mathsf{value}(f_0)$ by careful analysis, 
we just present a trivial lower bound here, because we only need the fact that $\mathsf{value}(f_0)$ has a finite lower bound to apply the induction.}

If $\mathsf{gap}(f_0) = 0$, then $f_0^{-1}(\varphi(x))$ forms a pair of adjacent vertices for any $x \in S$, and hence $\mathsf{cross}(f_0) = \mathsf{value}(f_0)=0$ holds. 
In such a case, since the trivial swap sequence consisting of a single token placement $f_0$ is a feasible sequence of length zero, the statement holds. 

Suppose that $\mathsf{gap}(f_0) \ge 1$. 
Let 
\[
i^* := \max \{i \in \{1, 2, \dots , n\} \mid  \mbox{$\exists s^* \in S$ s.t.~$f_0^{-1}(\varphi(s^*)) = \{i, j\}$ with $j \ge i+2$} \},    
\]
which is well-defined as $\mathsf{gap}(f_0) \ge 1$. 
Let $f_1$ be the token placement obtained from $f_0$ by applying a swap operation on $\{i^*, i^*+1\} \in E$. 
Then, we see that $\mathsf{value}(f_1) = \mathsf{value}(f_0) -1$ and $\mathsf{gap}(f_1) \le \mathsf{gap}(f_0)$ by the following case analysis. 
\begin{enumerate}
\item
Suppose that $i^*+1 \not\in f_0^{-1}(\varphi(s))$ for any $s \in S$. In this case, $\mathsf{gap}(f_1) = \mathsf{gap}(f_0) -1$ and $\mathsf{cross}(f_1) = \mathsf{cross}(f_0)$, 
which shows that $\mathsf{value}(f_1) = \mathsf{value}(f_0) -1$. 
\item
Suppose that $i^*+1 \in f_0^{-1}(\varphi(s))$ for some $s \in S$. By the maximality of $i^*$ and by the assumption that the instance has disjoint pairs, 
we obtain $f_0^{-1}(\varphi(s)) = \{j, i^*+1\}$ for $j < i^*$ or $j = i^*+2$. 
\begin{itemize}
\item
If $f_0^{-1}(\varphi(s)) = \{j , i^*+1\}$ for some $j < i^*$, then $\mathsf{gap}(f_1) = \mathsf{gap}(f_0) -2$ and $\mathsf{cross}(f_1) = \mathsf{cross}(f_0)-1$, 
which shows that $\mathsf{value}(f_1) = \mathsf{value}(f_0) -1$. 
\item
If $f_0^{-1}(\varphi(s)) = \{i^*+1, i^*+2\}$, then $\mathsf{gap}(f_1) = \mathsf{gap}(f_0)$ and $\mathsf{cross}(f_1) = \mathsf{cross}(f_0)+1$, 
which shows that $\mathsf{value}(f_1) = \mathsf{value}(f_0) -1$. 
\end{itemize}
\end{enumerate}

Since $(\mathsf{gap}(f_1), \mathsf{value}(f_1))$ is lexicographically smaller than $(\mathsf{gap}(f_0), \mathsf{value}(f_0))$, 
by the induction hypothesis,
instance $(G, P = (S, \preceq), T, \varphi, f_1)$ has a feasible swap sequence of length $\mathsf{value}(f_1) = \mathsf{value}(f_0) -1$. 
Since $f_1$ is obtained from $f_0$ by a single swap operation, this shows that 
$(G, P = (S, \preceq), T, \varphi, f_0)$ has a feasible swap sequence of length $\mathsf{value}(f_0)$. 
\end{proof}

The proof of Lemma \ref{lem:disjointupper} shows that, by applying $\mathsf{value}(f_0)$ swap operations, 
we can transform $f_0$ into another token placement $f$
such that $f^{-1}(\varphi(s))$ forms a pair of adjacent vertices for any $s \in S$. 
This means that the obtained sequence is a feasible solution for any poset $P$ on $S$. 

We next show that the optimal value is at least $\mathsf{value}(f_0)$. 

\begin{lemma}\label{lem:disjointlower}
Each feasible swap sequence for the instance $(G, P = (S, \preceq), T, \varphi, f_0)$ has length at least $\mathsf{value}(f_0)$. 
\end{lemma}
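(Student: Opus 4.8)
The plan is to establish a lower bound on the length of any feasible swap sequence via a potential/amortization argument: I will track how the quantity $\mathsf{value}$ can change along a swap sequence and show that a single swap can decrease it by at most one, while a feasible sequence must drive a related quantity down to a controlled value. Concretely, let $f_0 \leadsto f_1 \leadsto \dots \leadsto f_\ell$ be a feasible swap sequence, with $\varphi(s_j)$ realized at $f_{i_j}$. First I would observe that for each $s \in S$, the function $i \mapsto \mathsf{gap}_{f_i}(s)$ changes by at most $1$ per swap and must reach $0$ at the step where $\varphi(s)$ is realized; so $\sum_{s} (\text{number of swaps ``charged'' to bringing } \mathsf{gap}(s) \text{ down})$ is at least $\sum_s \mathsf{gap}_{f_0}(s) = \mathsf{gap}(f_0)$ — but this naive bound overcounts, since one swap can simultaneously help two pairs that cross. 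The correction term is exactly $\mathsf{cross}(f_0)$: each crossing pair can share at most one swap, so the true lower bound is $\mathsf{gap}(f_0) - \mathsf{cross}(f_0) = \mathsf{value}(f_0)$.

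To make this rigorous, the cleaner route is to show directly that for a single swap $f \leadsto f'$ on edge $\{i, i+1\}$, we have $\mathsf{value}(f') \ge \mathsf{value}(f) - 1$, together with the boundary fact that if every $\varphi(s)$ has been realized somewhere along the sequence then... — but $\mathsf{value}$ of the final placement need not be $0$, so I need to be more careful. Instead I would argue as follows: define, for the feasible sequence, the quantity $\Phi(i) = \sum_{s \in S} \mathbf{1}[\varphi(s)\text{ not yet realized at time }i]\cdot(\text{something})$; actually the simplest correct formulation is to restrict attention to the ``first realization time'' of each $s$. Let $\ell^* = \max_j i_j$ be the last time any pair is realized; I claim $\ell \ge \ell^*$ trivially, and $\ell^* \ge \mathsf{value}(f_0)$. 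For the latter, I would show by induction on the length of the truncated sequence $f_0 \leadsto \dots \leadsto f_{\ell^*}$ (mirroring the structure of Lemma~\ref{lem:disjointupper}) that any sequence realizing all of $S$ has length at least $\mathsf{value}(f_0)$: each swap decreases $\mathsf{value}$ by at most $1$ (by the same case analysis as in Lemma~\ref{lem:disjointupper}, now run in reverse / for an arbitrary swap rather than the greedy one), and once all pairs are realized the configuration $f_{\ell^*}$ satisfies $\mathsf{gap}_{f_{\ell^*}}(s) = 0$ only for those $s$ realized exactly at time $\ell^*$ — hmm, that is still not $\mathsf{value}(f_{\ell^*}) = 0$.

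The cleanest fix, and the one I would actually carry out, is a per-pair charging argument that avoids needing $\mathsf{value}$ of intermediate placements. For each $s \in S$ with $f_0^{-1}(\varphi(s)) = \{\alpha_1 < \alpha_2\}$, and each index $m \in \{\alpha_1, \alpha_1+1, \dots, \alpha_2 - 1\}$ (there are $\mathsf{gap}_{f_0}(s)$ such ``slots''), there must be at least one swap in the subsequence $f_0 \leadsto \dots \leadsto f_{i(s)}$ performed on an edge $\{m', m'+1\}$ that ``reduces the span of $s$ across the cut between positions $m$ and $m+1$'' — i.e., a swap that moves one of the two tokens of $\varphi(s)$ across that cut toward the other. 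Summing over all pairs and all slots gives $\mathsf{gap}(f_0)$ charges. A single swap on edge $\{m',m'+1\}$ can receive a charge only from pairs $s$ whose span currently straddles that cut; among such pairs, at most... — and here is the main obstacle — I must bound how many distinct pairs can be ``helped'' by one swap and still each be charged. Since the instance has disjoint pairs, a swap on $\{m', m'+1\}$ exchanges two tokens, at most two of which belong to $\bigcup_s \varphi(s)$, and a given swap moves at most one token of a given pair; the key combinatorial point is that a swap can be charged by two pairs $s, s'$ only if both straddle the cut, which (given disjoint pairs and that the swap touches positions $m', m'+1$) forces $s$ and $s'$ to cross. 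Hence the number of ``double-charged'' swaps is at most $\mathsf{cross}(f_0)$, each double-charge saving one swap relative to the naive count, so $\ell \ge i(s) \text{-total} \ge \mathsf{gap}(f_0) - \mathsf{cross}(f_0) = \mathsf{value}(f_0)$. The main work, and the step I expect to be delicate, is the bookkeeping that makes ``charged by two crossing pairs'' well-defined and shows the total saving is genuinely at most $\mathsf{cross}(f_0)$ rather than more — I would handle this by assigning to each crossing pair $\{s,s'\}$ at most one distinguished shared swap and verifying no swap is shared by two different crossing pairs.
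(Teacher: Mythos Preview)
Your first instinct—the potential argument showing that a single swap can decrease $\mathsf{value}$ by at most $1$—is exactly the paper's proof, and the case analysis (on whether $\mathsf{gap}$ drops by $0$, $1$, or $2$, checking what happens to $\mathsf{cross}$ in each case) is its entire content. You are right that this alone does not finish, since $\mathsf{value}(f_\ell)$ need not vanish; the paper glosses over this, but the missing glue is precisely the ``not yet realized'' idea you raised and then abandoned. Once $s$ is realized at time $i(s)$ we may drop it from $S$: at that moment $\mathsf{gap}_{f_{i(s)}}(s)=0$, and a pair whose two tokens are adjacent cannot cross any other pair, so removing $s$ leaves $\mathsf{value}$ unchanged. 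Tracking $\mathsf{value}$ with respect to the shrinking set of not-yet-realized pairs, the one-step bound yields $0 = \mathsf{value}_\emptyset(f_\ell) \ge \mathsf{value}_S(f_0) - \ell$, which is the lemma. You should have stayed with that route.

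The charging argument you commit to instead has a genuine gap. A swap that simultaneously decreases $\mathsf{gap}(s)$ and $\mathsf{gap}(s')$ forces $s$ and $s'$ to cross \emph{at the moment of that swap}, not in $f_0$. There is no reason the number of such double-help swaps is bounded by $\mathsf{cross}(f_0)$: a pair $\{s,s'\}$ that does not cross in $f_0$ can be made to cross by earlier swaps and then be double-helped, and a pair that does cross in $f_0$ can be double-helped repeatedly if the tokens separate and re-cross. Your proposed bookkeeping (assign to each crossing-in-$f_0$ pair at most one shared swap) cannot work, because the swaps you need to account for are indexed by crossing-at-swap-time pairs, a different set. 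To repair this you would have to track how $\mathsf{cross}$ evolves along the sequence alongside $\mathsf{gap}$—which is exactly the $\mathsf{value}$ potential you discarded.
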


\begin{proof}
It suffices to show that no single swap operation can decrease $\mathsf{value}(f)$ by more than one. 
Consider a single swap operation on $\{i, i+1\} \in E$ that transforms $f$ into $f'$. 
It is easy to see that $\mathsf{gap}(f') \ge \mathsf{gap}(f) -2$ and $\mathsf{cross}(f') \le \mathsf{cross}(f)+1$. 
We now prove $\mathsf{value}(f') \ge \mathsf{value}(f) -1$ by the case analysis. 
\begin{enumerate}
\item
Suppose that $\mathsf{gap}(f') \ge \mathsf{gap}(f)$. 
In this case, since $\mathsf{cross}(f') \le \mathsf{cross}(f)+1$, we obtain $\mathsf{value}(f') \ge \mathsf{value}(f) -1$. 

\item
Suppose that $\mathsf{gap}(f') = \mathsf{gap}(f)-1$. 
In this case, we have one of the following.  
\begin{itemize}
\item
There exists $s \in S$ such that $f^{-1}(\varphi(s)) = \{j, i+1\}$ for some $j \le i-1$, and $i \not\in f^{-1}(\varphi(s'))$ for any $s' \in S$, or
\item
There exists $s \in S$ such that $f^{-1}(\varphi(s)) = \{i, j\}$ for some $j \ge i+2$, and $i+1 \not\in f^{-1}(\varphi(s'))$ for any $s' \in S$. 
\end{itemize}
In both cases, we obtain $\mathsf{cross}(f') = \mathsf{cross}(f)$, which shows that $\mathsf{value}(f') = \mathsf{value}(f) -1$. 

\item
Suppose that $\mathsf{gap}(f') = \mathsf{gap}(f)-2$. 
In this case, 
there exist $s, s' \in S$ such that $f^{-1}(\varphi(s)) = \{j, i+1\}$ for some $j \le i-1$ and $f^{-1}(\varphi(s')) = \{i, k\}$ for some $k \ge i+2$. 
Then, we obtain $\mathsf{cross}(f') = \mathsf{cross}(f)-1$, which shows that $\mathsf{value}(f') = \mathsf{value}(f) -1$. 
\end{enumerate}

This completes the proof of the lemma. 
\end{proof}

\section{Hardness: Stars and Antichains}\label{sec:stars_antichains}

In this section, we show that the problem is NP-hard even when $G$ is a star and $P$ is an antichain.
Recall that an antichain is a poset in which every pair of elements is incomparable.

\begin{theorem}
\label{thm:hard-star}
 \textsc{Qubit Routing} is NP-hard even when $G$ is a star and $P$ is an antichain.
\end{theorem}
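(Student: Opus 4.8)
The plan is to show that on a star the problem is, up to a dummy token, exactly \textsc{Vertex Cover}. First I would analyze the structure of swap sequences on a star $G$ with center $c$ and a set of leaves. Since the only edges of $G$ join $c$ to a leaf, every swap exchanges the token on $c$ with the token on some leaf; hence a swap sequence $f_0 \leadsto f_1 \leadsto \dots \leadsto f_\ell$ is completely described by the sequence $u_0 := f_0(c), u_1, \dots, u_\ell$ of tokens occupying the center, which satisfies $u_{i+1} \neq u_i$. Conversely, from any placement one swap suffices to bring any prescribed token to the center, so every sequence with $u_{i+1}\neq u_i$ and $u_0=f_0(c)$ arises from some length-$\ell$ swap sequence. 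At a placement whose center token is $u$, the center is adjacent to every leaf and no two leaves are adjacent, so the pairs realized at that placement are precisely the pairs containing $u$. Because $P$ is an antichain, there are no ordering constraints, so a swap sequence is feasible if and only if every pair in $\{\varphi(s) : s \in S\}$ contains at least one token of the set $U := \{u_0, u_1, \dots, u_\ell\}$.

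Next, introduce the requirement graph $\mathcal{H}$ with vertex set $T$ and edge set $\{\varphi(s) : s \in S\}$. The discussion above shows that a feasible swap sequence of length at most $\ell$ exists if and only if there is a set $U \subseteq T$ with $f_0(c) \in U$, with $|U| \le \ell + 1$, and with $U$ a vertex cover of $\mathcal{H}$: for the ``only if'' direction take $U$ to be the set of center tokens, which takes at most $\ell+1$ distinct values; for the ``if'' direction, enumerate $U$ as a sequence starting at $f_0(c)$ and realize it with $|U|-1$ swaps. Consequently the optimal length of the instance equals $\tau^\ast - 1$, where $\tau^\ast$ is the minimum size of a vertex cover of $\mathcal{H}$ that contains the token $f_0(c)$.

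Finally I would reduce from \textsc{Vertex Cover}: given a graph $\mathcal{H}_0 = (V_0, E_0)$ and an integer $k$, construct the \textsc{Qubit Routing} instance with token set $T := V_0 \cup \{z\}$ for a fresh token $z$, let $G$ be the star with $|T|$ vertices (center $c$ and $|V_0|$ leaves), let $f_0$ be any bijection with $f_0(c) = z$, let $S := E_0$ with $\varphi$ sending each edge to itself (a pair of tokens, since $V_0 \subseteq T$), and let $P$ be the antichain on $S$. Since $z$ is isolated in the requirement graph $\mathcal{H}$, a minimum vertex cover of $\mathcal{H}$ containing $z$ has size $\tau(\mathcal{H}_0) + 1$, so by the previous paragraph the optimal length of the constructed instance is exactly $\tau(\mathcal{H}_0)$; thus the instance admits a feasible swap sequence of length at most $k$ if and only if $\mathcal{H}_0$ has a vertex cover of size at most $k$. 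As \textsc{Vertex Cover} is NP-hard and the construction is clearly polynomial, this establishes Theorem~\ref{thm:hard-star}. The step requiring the most care is the feasibility characterization in the first two paragraphs — in particular the observations that a single placement simultaneously realizes \emph{all} pairs through its center token, that the center-token sequence can be made arbitrary, and the matching lower bound on the number of distinct center tokens — but once that is in place the reduction itself is routine.
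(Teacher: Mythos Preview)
Your proof is correct and follows essentially the same reduction as the paper: from \textsc{Vertex Cover}, with tokens $V(H)\cup\{\text{dummy}\}$, the dummy placed at the center of the star, $S=E(H)$, and $\varphi$ the identity. Your version is slightly more polished in that you first extract the clean structural fact that the optimal length equals $\tau^\ast-1$ (minimum vertex cover of the requirement graph forced to contain the initial center token), whereas the paper argues the two directions of the equivalence more directly; but the construction and the key observation---that feasibility on a star with an antichain is exactly ``every pair meets the set of center tokens''---are the same.
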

\begin{proof}
    We reduce \textsc{Vertex Cover}, which is known to be NP-hard~\cite{GAREY1976237}.

    \begin{mdframed}
    \noindent
    \textsc{Vertex Cover}
    \begin{description}
    \item[Input.] A graph $H=(V(H), E(H))$ and a positive integer $k$. 
    \item[Question.] Is there a vertex subset $X \subseteq V(H)$, called a \emph{vertex cover} of $H$, such that $|X|\leq k$ and $\{u,v\}\cap X \neq \emptyset$ for every edge $\{u,v\} \in E(H)$?
    \end{description}
    \end{mdframed}

    Suppose that we are given an instance of \textsc{Vertex Cover} that consists of a graph $H=(V(H), E(H))$ and a positive integer $k$.
    Let $n = |V(H)|$. 
    We construct an instance of \textsc{Qubit Routing} as follows.
    Define a set of tokens as $T=V(H) \cup \{0\}$, where $0\not\in V(H)$.
    Let $S = E(H)$ and define $\varphi\colon S \to \binom{T}{2}$ as $\varphi(\{u,v\}) = \{u,v\}$ for each $\{u,v\} \in S$.
    The poset $P$ is an antichain, i.e., every pair of elements in $S$ is incomparable.
    Define a graph $G=(V, E)$ as $V = \{0,1,\dots,n\}$ and $E = \{\{0,i\} \mid i \in \{1,2,\dots,n\}\}$.
    The initial token placement $f_0\colon V\to T$ should satisfy $f_0(0) = 0$ and other tokens can be placed arbitrarily over the vertices $\{1,2,\dots,n\}$.

    We claim that there exists a swap sequence of length $k$ for the instance $(G, P, T, \varphi, f_0)$ of \textsc{Qubit Routing} if and only if there exists a vertex cover $X$ in $H$ of size $k$.
    Suppose that there exists a vertex cover $X=\{v_1, \dots, v_k\}$ in $H$ of size $k$.
    Then, we construct the following swap sequence.
    In the swap $f_{i-1} \leadsto f_i$, we exchange the token at the vertex $0$ and the token $v_i$.
    Then, for each edge $\{v_i, v\} \in E(H)$, it holds that $\{f^{-1}_i(v_i), f^{-1}_i(v)\} = \{0, f^{-1}_i(v)\} \in E$.
    Therefore, the swap sequence $f_0 \leadsto \dots \leadsto f_k$ realizes $P$.
    Conversely, suppose that there exists a swap sequence $f_0 \leadsto f_1 \leadsto \dots \leadsto f_k$ of length $k$ for the instance $(G, P, T, \varphi, f_0)$.
    Let $X=\{v \in V(H) \mid f_i^{-1}(0) = v \text{ for some } i \in \{1,2,\dots,k\}\}$, i.e., $X$ is the set of vertices of $H$ that are placed at $0$ in the course of swaps.
    Since the swap sequence realizes $P$, for every edge $\{u,v\} \in E(H)$, there exists $f_i$ such that 
    $\{f^{-1}_i(u), f^{-1}_i(v)\} \in E$, which implies that $u \in X$ or $v \in X$.
    Therefore, $X$ is a vertex cover of $H$.    
\end{proof}

\section{Concluding Remarks}

We initiated algorithmic studies on the quantum routing problem, also known as the swap minimization problem, from the viewpoint of theoretical computer science.
The problem is of central importance in compiler design for quantum programs when they are implemented in some of the superconducting quantum computers such as IBM Quantum systems.

Most notably, we proved the quantum routing problem is NP-hard even when the graph topology of a quantum computer is a path, which corresponds to the so-called linear nearest neighbor architecture.
In our proof, the initial token placement can be chosen arbitrarily.
This implies that the combined optimization of the quantum assignment and the quantum routing is also NP-hard for the same architecture.

We also gave some algorithmic results, but they were restricted to the case of the linear nearest neighbor architectures.
Possible future work is to give algorithmic results with theoretical guarantees for other graph topologies.

\paragraph{Acknowledgment}

We thank Toshinari Itoko at IBM Research Tokyo for bringing the qubit allocation problem to our attention and also for his valuable comments.

\bibliographystyle{abbrvurl}
\bibliography{qubit}

\end{document}